\documentclass{article}
\usepackage{amsmath}
\usepackage{amscd}
\usepackage{graphicx}
\usepackage{latexsym}
\usepackage{amsmath,amsthm,amsfonts,amssymb,cite%
}
\usepackage{color}
\usepackage{extarrows}
\usepackage{latexsym}
\usepackage{amsmath}

\theoremstyle{remark}

\theoremstyle{definition}
\newtheorem{definition}{Definition}[section]

\newtheorem{prop}{Proposition}[section]

\newtheorem*{rem}{Remark}
\renewcommand\dagger*
\textwidth  = 6.5truein \textheight = 9.2truein

\renewcommand\Theta{\boldsymbol\theta}
\numberwithin{equation}{section}
\begin{document}
\hoffset = -2.4truecm \voffset = -2truecm
\renewcommand{\baselinestretch}{1.2}
\newcommand{\mb}{\makebox[10cm]{}\\ }
\title{The Cauchy two-matrix model, C-Toda lattice and CKP hierarchy}
\author{Chunxia Li\thanks{School of Mathematical Sciences,
Capital Normal University,
Beijing 100048, CHINA}, Shi-Hao Li{\thanks{LSEC, Institute of Computational Mathematics and Scientific Engineering Computing, AMSS, Chinese Academy of Sciences, P.O.Box 2719, Beijing 100190, CHINA
} \thanks{Department of Mathematical Sciences, University of the Chinese Academy of Sciences, Beijing, CHINA} \thanks{Corresponding author: Shi-Hao Li (lishihao@lsec.cc.ac.cn)}
}}
\date{}
\maketitle

\begin{abstract}
This paper mainly talks about the Cauchy two-matrix model and its corresponding integrable hierarchy with the help of orthogonal polynomials theory and Toda-type equations. Starting from the symmetric reduction of Cauchy biorthogonal polynomials, we derive the Toda equation of CKP type (or the C-Toda lattice) as well as its Lax pair by introducing time flows. Then, matrix integral solutions to the C-Toda lattice are extended to give solutions to the CKP hierarchy which reveals the time-dependent partition function of the Cauchy two-matrix model is nothing but the $\tau$-function of the CKP hiearchy. At last, the connection between the Cauchy two-matrix model and Bures ensemble is established from the point of view of integrable systems.
\end{abstract}

\vskip 0.5cm
\tableofcontents

%
%
\vskip 0.5cm Key words. Matrix Models, Cauchy biorthogonal polynomials, C-Toda lattice,  CKP hierarchy, $\tau$-function theory

\section{Introduction}
The theory of matrix models have been an incredibly fertile ground for the intriguing connections between theoretical physics, statistics, analysis and combinatorics in the past thirty years. It has been continuously studied since matrix models provide possible models for non-perturbative string theory and two-dimensional gravity \cite{BK,GM}. One of the most attractive features is their connections with integrable theory which indicates partition functions of certain matrix models are $\tau$-functions of integrable systems. {Moreover, the partition function of the matrix model can be cancelled by the Borel Virasoro algebra, which provides us a way to reformulate the matrix models in terms of more invariant terms and obtain the corresponding Virasoro constraints. This approach can also be obtained in terms of vertex operator in \cite{Adler95,DKM}.}

In 2009, Bertola et al. proposed a new matrix model called the Cauchy two-matrix model \cite{BGS3}. The Cauchy two-matrix model is an analogue to the Itzykson-Zuber-Harish-Chandra (IZHC) model. Different from the IZHC model whose metric between two matrices is
\begin{align*}
d\mu(M_1,M_2)=dM_1dM_2e^{-N\mbox{Tr}(V_1(M_1)+V_2(M_2)-M_1M_2)},
\end{align*}
the Cacuhy two-matrix model is equipped with the metric
\begin{align*}
d\mu(M_1,M_2)=dM_1dM_2\frac{\alpha(M_1)\beta(M_2)}{\det(M_1+M_2)^N}
\end{align*}
with $M_1$, $M_2$ being positive Hermitian matrices of size $N$.
As is known, the partition function of IZHC model is related to the KP hierarchy (or $2$d-Toda hierchy) \cite{DKM}, so it is natural for us to ask what the corresponding integrable hierarchy is related to the Cauchy two-matrix model.

For this purpose, we will use the method based on the orthogonal polynomials theory \cite{DKM,GMMMO}. This method is based on the fact that the Toda-type equation, whose $\tau$-function is the most general in an integrable hierarchy, can be derived from orthogonal polynomial theory. According to Sato theory, if we choose suitable higher-order time flows, then the $\tau$-function of the Toda-type equation can be generalized to the corresponding integrable hierarchy. Some famous examples include the $2$d-Toda lattice and the KP hierarchy \cite{DKM}, the $1$d-Toda lattice and the KdV hierarchy \cite{GMMMO}, the Toeplitz lattice and the mKdV hierarchy \cite{AM03} and so on. Therefore, the connections between matrix models and integrable hierarchies can be transformed into the problems of relationships between orthogonal polynomials and Toda-type equations.

As the average characteristic polynomials of the Cauchy two-matrix model, the Cauchy biorthogonal polynomials (CBOPs for brevity) are taken into consideration \cite{BGS3}. This family of polynomials are firstly proposed to solve the Hermite-Pad\'e type approximation problems associated with the inverse spectral problem for the peakon solution of Degasperis-Procesi (DP) equation \cite{BGS1,BGS2}. Later on, the Christoffel-Darboux identity is established and the determinant point process related to the Cauchy two-matrix model is found. In contrast with the standard orthogonal polynomials, a charming character of CBOPs is the four-term recurrence relation which can characterize the corresponding Riemann-Hilbert problem and provide a spectral problem for integrable hierarchies. In \cite{HMST}, Miki and Tsujimoto derived two discrete integrable systems from the discrete spectral transformations of CBOPs.

In this paper, we will give an evolutionary perspective and derive the corresponding semi-discrete lattice by imposing a time deformation in measure or weight function. The equation, which is produced by CBOPs, is called the Toda lattice of CKP type (or C-Toda lattice) in this paper because it enjoys the same $\tau$-function as that of CKP hierarchy. Furthermore, with a suitable choice of higher-order time flow, we find that the partition function of the Cauchy two-matrix model indeed can act as the $\tau$-function of the CKP hierarchy and give the answer to the question which is mentioned above.

Besides, Bertola et al. \cite{BGS3} realized that there may exist some connections between the Cauchy two-matrix model and Bures ensemble on the level of correlation functions. A direct proof of the relation between Cauchy and Bures ensemble was given in \cite{FK} based on the relations between determinants and Pfaffians. Since as was shown in \cite{HL}, the time-dependent partition function of Bures ensemble can be viewed as the $\tau$-function of the BKP hierarchy and in this paper we have shown that the partition function of the Cauchy ensemble can be viewed as the $\tau$-function of the CKP hierarchy, then another natural question is whether there is a correspondence between the BKP hierarchy and the CKP hierarchy that can associate Bures ensemble with Cauchy ensemble? An affirmative answer is given in the Hirota's book \cite{HRN} for the correspondence between Sawada-Kotera equation and Kaup-Kuperschmidt equation which are the first members of the reduced BKP and CKP hierarchy, respectively. Therefore, we will use the reduction theory proposed by Jimbo and Miwa \cite{JM} in the matrix integral method and present the modified Kaup-Kuperschmidt equation connecting Bures ensemble with Cauchy ensemble.

This paper is organized as follows. In Section 2, we recall some known facts of CBOPs briefly and present the four-term recurrence relation for the symmetric reduction of CBOPs (sCBOPs) in details. In Section 3, we establish the C-Toda lattice by introducing continuous time variables into sCBOPs and derive its Lax pair. In Section 4, matrix integral solutions are presented to the C-Toda lattice and extended to give solutions to the CKP hierarchy. In Section 5, the connection between the Cauchy ensemble and Bures ensemble is clarified from the viewpoint of integrable systems. Section 6 is devoted to conclusions and discussions. In Appendix, direct proofs of Proposition 3.1 and Proposition 3.3 are given by using Pfaffians.

\section{Symmetric reduction of Cauchy biorthogonal polynomials}
Cauchy biorthogonal polynomials were firstly proposed by Lundmark and Szmigielski by studying the multi-peakon flows of Degasperis-Procesi equation \cite{HLJS}. Since then, CBOPs have drawn much attention not only in integrable systems but also in random matrix theory and field theory \cite{BGS}.
In this section, we will give a brief review of some known facts on CBOPs and then restrict ourselves to the symmetric reduction of CBOPs and present the corresponding four-term recurrence relation with explicit determinant expressions.

\subsection{Cauchy biorthogonal polynomials}
\theoremstyle{definition}
\begin{definition}\label{icbops}
Consider the bilinear inner product $\langle \cdot,\cdot\rangle$ defined on $\mathbb{R}[x]\times\mathbb{R}[x]\to\mathbb{R}$ by
\begin{align}
\langle f(x),g(y)\rangle=\iint_{\mathbb{R}_{+}^2}\frac{f(x)g(y)}{x+y}d\rho_1(x)d\rho_2(y),
\end{align}
where $d\rho_1(x),\, d\rho_2(y)$ are two Stieltjes measures on $\mathbb{R}_{+}$. Then the pair of the sequences of the monic polynomials $(\{p_m(x)\}_{m=0}^{\infty}$, $\{q_n(y)\}_{n=0}^{\infty})$ are called CBOPs with respect to the bilinear form $\langle\cdot,\cdot\rangle$ if they satisfy
\begin{align}
\langle p_n(x),q_m(y)\rangle=h_n\delta_{n,m}.\label{OR1}
\end{align}
\end{definition}

Denote the moments
\begin{align*}
I_{i,j}=\langle x^i,y^j\rangle=\iint_{\mathbb{R}_+^2}\frac{x^iy^j}{x+y}d\rho_1(x)d\rho_2(y),
\end{align*}
which are required to satisfy the following conditions:
\begin{enumerate}
\item The moment $I_{i,j}$ is finite for all $i,j\in \mathbb{Z}_{\ge 0}$,
\item The determinant of the moment matrix is non-zero, i.e. $\mbox{det}(\langle x^i,y^j\rangle)_{i,j\ge 0}^n\ne 0$ for all $n\in \mathbb{Z}_{\ge 0}$,
\end{enumerate}
and the constraints of the moment sequences are equal to the well-posedness of the CBOPs.

Assume that $p_n(x)$ and $q_m(y)$ are monic polynomials in the form of
\begin{align*}
&p_n(x)=x^n+a_{n,1}x^{n-1}+\cdots+a_{n,0},\\
&q_m(y)=y^m+b_{m,1}y^{m-1}+\cdots+b_{m,0}.
\end{align*}
In this setting, CBOPs can be uniquely determined based on the orthogonal relation (\ref{OR1}) \cite{BGS,X}.
Denote $\tau_{k}=\det(I_{i,j})_{i,j=0,\cdots,k-1}$. Then $p_n(x)$, $q_n(y)$ can be expressed in terms of determinants as
\begin{align*}
p_n(x)=\frac{1}{\tau_n}\begin{vmatrix}I_{0,0}&I_{0,1}&\cdots&I_{0,n}\\ \vdots&\vdots&&\vdots\\I_{n-1,0}&I_{n-1,1}&\cdots&I_{n-1,n}\\1&x&\cdots& x^n\end{vmatrix}, \quad q_n(y)=\frac{1}{\tau_n}\begin{vmatrix}I_{0,0}&\cdots&I_{0,n-1}&1\\I_{1,0}&\cdots&I_{1,n-1}&y\\ \vdots&&\vdots&\vdots\\I_{n,0}&\cdots&I_{n,n-1}&y^n\end{vmatrix}.
\end{align*}
and $h_n$ can be also written as a ratio of two determinants as
\begin{align*}
h_n=\frac{\tau_{n+1}}{\tau_n}.
\end{align*}

A remarkable feature of CBOPs is due to the rank-$1$ shift condition of the moments
\begin{align}\label{bsm}
I_{i,j+1}+I_{i,j+1}=\alpha_i\beta_j,\quad\quad \alpha_i=\int_{\mathbb{R}_+}x^id\rho_1(x),\,\beta_j=\int_{\mathbb{R}_+}y^jd\rho_2(y),
\end{align}
from which a four-term recurrence relation and related Riemann-Hilbert problem can be characterized \cite{BGS}.

\subsection{The symmetric reduction of CBOPs}
The symmetric reduction we take here is to let $d\rho_1(x)=d\rho_2(x)=\alpha(x)dx$ in Definition (\ref{icbops}) for CBOPs which means the measures are equal so that the moments are symmetric, i.e.
\begin{align*}
I_{i,j}=\iint_{\mathbb{R}_+^2}\frac{x^iy^j}{x+y}\alpha(x)\alpha(y)dxdy
\end{align*}
such that
$I_{i,j}=I_{j,i}$ and $\alpha_i=\beta_i$ in (\ref{bsm}). Hereafter, we call them the symmetric reduction of Cauchy biorthogonal polynomials and denote as sCBOPs for simplicity. For self-consistency, we shall re-prove that sCBOPs satisfy four-term recurrence relations and give explicit expressions in terms of determinants for coefficients in four-term recurrence relations.
\begin{prop}By assuming $$a_n=-\frac{\int_{\mathbb{R}_+} p_{n+1}(x)\alpha(x)dx}{\int_{\mathbb{R}_+} p_n(x)\alpha(x)dx},$$ sCBOPs satisfy the following four-term recurrence relations
\begin{align}\label{rr}
x(p_{n+1}(x)+a_np_n(x))=p_{n+2}(x)+b_np_{n+1}(x)+c_np_n(x)+d_np_{n-1}(x),\quad n\geq 0\end{align}
with $p_{-1}(x)=0$ and constants $b_n,c_n,d_n$ being uniquely determined based on the orthogonality condition.
\end{prop}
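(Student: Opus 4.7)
The plan is to exploit the algebraic identity $\frac{x}{x+y} = 1 - \frac{y}{x+y}$, which is the polynomial-level incarnation of the rank-one shift condition (\ref{bsm}) in the symmetric case. Since $x(p_{n+1}(x) + a_n p_n(x))$ is a monic polynomial of degree $n+2$, I would first expand it in the monic basis $\{p_k(x)\}_{k=0}^{n+2}$:
$$x(p_{n+1}(x) + a_n p_n(x)) = p_{n+2}(x) + \sum_{k=0}^{n+1} \gamma_k^{(n)} p_k(x),$$
and recover each coefficient from biorthogonality (\ref{OR1}) as $\gamma_k^{(n)} h_k = \langle x(p_{n+1}(x) + a_n p_n(x)), q_k(y)\rangle$. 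The four-term recurrence (\ref{rr}) is then equivalent to the vanishing $\gamma_k^{(n)} = 0$ for all $k \le n-2$, with $b_n = \gamma_{n+1}^{(n)}$, $c_n = \gamma_n^{(n)}$, $d_n = \gamma_{n-1}^{(n)}$.

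Next, applying $\frac{x}{x+y} = 1 - \frac{y}{x+y}$ splits the bilinear form as
$$\langle x f(x), g(y)\rangle = \Bigl(\int_{\mathbb{R}_+} f(x)\alpha(x)\,dx\Bigr)\Bigl(\int_{\mathbb{R}_+} g(y)\alpha(y)\,dy\Bigr) - \langle f(x), y g(y)\rangle.$$
The prescribed value of $a_n$ is chosen precisely so that $\int_{\mathbb{R}_+}(p_{n+1}(x) + a_n p_n(x))\alpha(x)\,dx = 0$, which annihilates the rank-one product when $f = p_{n+1} + a_n p_n$. Consequently
$$\langle x(p_{n+1}(x) + a_n p_n(x)), q_k(y)\rangle = -\langle p_{n+1}(x) + a_n p_n(x), y q_k(y)\rangle.$$
Since $y q_k(y)$ is a polynomial of degree $k+1$, biorthogonality forces this inner product to vanish whenever $n+1 > k+1$ and $n > k+1$ simultaneously, i.e.\ whenever $k \le n-2$, so only $\gamma_{n+1}^{(n)}, \gamma_n^{(n)}, \gamma_{n-1}^{(n)}$ can be nonzero.

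The main obstacle is conceptual rather than computational: recognizing that the naive asymmetric calculation (computing $\langle x p_{n+1}, q_k\rangle$ directly) does not terminate at finitely many levels, and that it is precisely the linear combination $p_{n+1} + a_n p_n$ with $a_n$ as given that projects away the rank-one obstruction left behind by the Cauchy kernel, thereby restoring a finite-band structure. Once this key step is identified, uniqueness of $b_n,c_n,d_n$ follows from linear independence of $\{p_k\}$, and explicit determinantal expressions can be extracted by combining the recovered values of $\gamma_k^{(n)}$ with the formulas $p_n = \tau_n^{-1} \det(\cdots)$ and $h_n = \tau_{n+1}/\tau_n$ already recorded in the excerpt, with nondegeneracy of the coefficients guaranteed by the well-posedness assumption on the moment matrix.
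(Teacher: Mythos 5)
Your proposal is correct and follows essentially the same route as the paper: expand $x(p_{n+1}+a_np_n)$ in the monic basis, use the identity $\tfrac{x}{x+y}=1-\tfrac{y}{x+y}$ to write $\langle x f,g\rangle$ as a rank-one product minus $\langle f, yg\rangle$, observe that the stated $a_n$ kills the rank-one term, and then let biorthogonality annihilate $\langle p_{n+1}+a_np_n,\,yq_k\rangle$ for $k\le n-2$. The paper's proof is exactly this argument, followed by the same pairing against $q_{n-1},q_n,q_{n+1}$ to pin down $b_n,c_n,d_n$ in terms of $\tau_n$, $\tilde\tau_n$, $\sigma_n$.
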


\begin{proof}
Since $\{p_n(x)\}_{n=0}^\infty$ constitute of the basis of the polynomials space $\mathbb{R}[x]$, we assume that $$x(p_{n+1}(x)+a_np_n(x))=p_{n+2}(x)+\sum_{i=0}^{n+1}\gamma_ip_i(x).$$
By noticing  $$a_n=-\frac{\int_{\mathbb{R}_+} p_{n+1}(x)\alpha(x)dx}{\int_{\mathbb{R}_+} p_n(x)\alpha(x)dx},$$
we have
\begin{align*}
\iint_{\mathbb{R}_+^2}{(p_{n+1}(x)+a_np_n(x))q_m(y)}\alpha(x)\alpha(y)dxdy=\int_{\mathbb{R}_+}(p_{n+1}(x)+a_np_n(x))\alpha(x)dx\int_{\mathbb{R}_+}q_m(y)\alpha(y)dy=0.
\end{align*}
Further we have
\begin{align*}
\langle x(p_{n+1}(x)+&a_np_n(x)),q_m(y)\rangle\\
&=\iint_{\mathbb{R}_+^2}\frac{x(p_{n+1}(x)+a_np_n(x))q_m(y)}{x+y}\alpha(x)\alpha(y)dxdy\\
&=\iint_{\mathbb{R}_+^2}(p_{n+1}(x)+a_np_n(x))q_m(y)\alpha(x)\alpha(y)dxdy-\langle p_{n+1}(x)+a_np_n(x),yq_m(y)\rangle\\
&=-\langle p_{n+1}(x)+a_np_n(x),yq_m(y)\rangle.
\end{align*}
Due to the orthogonality condition \eqref{OR1}, it is obvious that $$\left\langle p_{n+2}(x)+\sum_{i=0}^{n+1}\gamma_ip_i(x),q_m(y)\right\rangle=-\langle p_{n+1}(x)+a_np_n(x),yq_m(y)\rangle=0,\quad \text{for\, \,  $m<n-1$},$$
from which we can draw the conclusion that $\gamma_m=0$ for $m<n-1$. Therefore, we have four-term recurrence relations
\begin{eqnarray}\label{inpro}
x(p_{n+1}(x)+a_np_n(x))=p_{n+2}(x)+\gamma_{n+1}p_{n+1}(x)+\gamma_np_n(x)+\gamma_{n-1}p_{n-1}(x).
\end{eqnarray}
To avoid confusion of notations, we use \eqref{rr} instead of \eqref{inpro} as the four-term recurrence relation for sCBOPs.

Denote $w_i=\int_{\mathbb{R}_+}x^i\alpha(x)dx$ and
\begin{equation}
\sigma_n=\begin{vmatrix}I_{0,0}&\cdots&I_{0,n}\\ \vdots&\vdots&\vdots\\I_{n-1,0}&\cdots&I_{n-1,n}\\w_0&\cdots& w_n\end{vmatrix},\quad \tilde{\tau}_n=\begin{vmatrix}I_{0,0}&\cdots&I_{0,n-1}\\ \vdots&\vdots&\vdots\\I_{n-2,0}&\cdots&I_{n-2,n-1}\\I_{n,0}&\cdots& I_{n,n-1}\end{vmatrix}.\label{S2}
\end{equation}
By taking the orthogonality condition \eqref{OR1} into account, constants $a_n,b_n,c_n$ and $d_n$ in \eqref{rr} can be expressed explicitly.
From the determinant expression of $p_n(x)$, one can see that
\begin{equation*}
{\int_{\mathbb{R}_+} p_n(x)\alpha(x)dx}=\frac{\sigma_n}{\tau_n},\quad a_n=-\frac{\int_{\mathbb{R}_+} p_{n+1}(x)\alpha(x)dx}{\int_{\mathbb{R}_+} p_n(x)\alpha(x)dx}=-\frac{\sigma_{n+1}\tau_{n}}{\sigma_n\tau_{n+1}}.\label{AC}
\end{equation*}
Taking inner products of (\ref{inpro}) with $q_{n-1}(y)$, $q_n(y)$ and $q_{n+1}(y)$, respectively and comparing coefficients of both sides, we can get $b_n$, $c_n$ and $d_n$ expressed in terms of $\tau_n$, $\tilde{\tau}_n$ and $\sigma_n$ as
\begin{eqnarray}
\left\{\begin{aligned}
&b_n=-\frac{\sigma_{n+1}\tau_{n}}{\sigma_n\tau_{n+1}}+\frac{\tilde{\tau}_{n+2}}{\tau_{n+2}}-\frac{\tilde{\tau}_{n+1}}{\tau_{n+1}},\\
&c_n=-\frac{\tau_{n}\tau_{n+2}}{\tau_{n+1}^2}-\frac{\sigma_{n+1}\tau_{n}}{\sigma_n\tau_{n+1}}\left[\frac{\tilde{\tau}_n}{\tau_n}-\frac{\tilde{\tau}_{n+1}}{\tau_{n+1}}\right],\\
&d_n=\frac{\sigma_{n+1}\tau_{n-1}}{\sigma_n\tau_n}.
\end{aligned}\right.\label{CT}
\end{eqnarray}
 \end{proof}
The symmetric reduction of Stieltjes measures is firstly considered here which is helpful to find out the connection between C-Toda lattice and CKP hierarchy. Coefficients $a_n,\,b_n,\,c_n$ and $d_n$ in the four-term recurrence relation \eqref{rr} are expressed in terms of $\tau_n$, $\tilde{\tau}_n$ and $\sigma_n$. Later on, we will see that $\tilde{\tau}_n$ and $\tau_n$ are closely related after introducing time flows, which makes it possible to express $a_n,\,b_n,\,c_n$ and $d_n$ only in terms of $\tau_n$ and $\sigma_n$.

\section{The Toda equation of CKP type}
Recall that our main goal in this paper is to establish the connection between integrable hierarchy with Cauchy matrix model, which is equal to establish the relationship between CBOPs and corresponding Toda-type equation. Therefore, in this section, we will focus on how to derive the Toda equation of CKP type, or the C-Toda lattice from sCBOPs. Moreover, a Lax pair of C-Toda lattice will be obtained by introducing time deformation in measure or weight function of sCBOPs.

\subsection{The t-deformations of sCBOPs and the C-Toda lattice}\label{C-Toda}
To build the C-Toda lattice, we introduce the continuous time variable $t$ into the weight function $\alpha(x)$ such that $\alpha(x;t)$ has the form
\begin{align*}
\alpha(x;t)=\exp(V(x)+xt),
\end{align*}
where the function $V(x)$ is required to ensure the convergency of moments or equivalently, we introduce the time deformation on the measure $\rho(x;t)$ such that $$d\rho(x;t)=e^{xt}d\rho(x;0).$$ Consequently, the bimoments $I_{i,j}$ and single moments $\omega_i$ are dependent of time $t$ such that
\begin{align*}
&I_{i,j}=\iint_{\mathbb{R}_+^2}\frac{x^iy^j}{x+y}\alpha(x;t)\alpha(y;t)dxdy,\quad\omega_i=\int_{\mathbb{R}_+}x^i\alpha(x;t)dx,\\
&\frac{d}{dt}I_{i,j}=I_{i+1,j}+I_{i,j+1}=\omega_i\omega_j,\quad\quad\quad\frac{d}{dt}\omega_i=\omega_{i+1}.
\end{align*}
In this setting, $p_n(x)$ and $q_m(y)$ become time-dependent functions as well as $\tau_n$, $\tilde{\tau}_n$ and $\sigma_n$. Particularly, $\tau_n$ and $\tilde{\tau}_n$ are closely related to each other.

\begin{prop}\label{derivationtau}
The functions $\tau_n$ and $\tilde{\tau}_n$ given by \eqref{S2} satisfy the following formula
\begin{align*}
\frac{d}{dt}\tau_n=2\tilde{\tau}_n.
\end{align*}
\end{prop}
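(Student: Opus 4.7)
The plan is to differentiate the determinant $\tau_n = \det(I_{i,j})_{i,j=0}^{n-1}$ entrywise, substitute the recursion $\frac{d}{dt} I_{i,j} = I_{i+1,j} + I_{i,j+1}$, and show that each of the two resulting sums independently equals $\tilde{\tau}_n$. The factor of $2$ will then emerge directly from the symmetry $I_{i,j} = I_{j,i}$ inherent in the symmetric reduction.

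First, by Jacobi's formula for the derivative of a determinant,
\begin{align*}
\frac{d}{dt}\tau_n = \sum_{i,j=0}^{n-1} C_{ij}\,\frac{d I_{ij}}{dt} = \sum_{i,j=0}^{n-1} C_{ij}\,I_{i+1,j} + \sum_{i,j=0}^{n-1} C_{ij}\,I_{i,j+1},
\end{align*}
where $C_{ij}$ is the $(i,j)$-cofactor of the moment matrix. The first sum, grouped by $i$, reads $\sum_{i} \sum_j C_{ij} I_{i+1,j}$, which is $\sum_i \det M^{(i)}$ where $M^{(i)}$ is obtained from the moment matrix by replacing row $i$ with $(I_{i+1,0}, I_{i+1,1}, \ldots, I_{i+1,n-1})$. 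For $i \le n-2$ this new row coincides with the original row $i+1$, so $\det M^{(i)} = 0$; the only surviving term is $i = n-1$, which produces exactly the determinant $\tilde{\tau}_n$ by its definition in \eqref{S2}.

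Second, the sum $\sum_{i,j} C_{ij} I_{i,j+1}$ is handled by grouping by $j$: it equals $\sum_j \det N^{(j)}$ where $N^{(j)}$ replaces column $j$ by $(I_{0,j+1}, I_{1,j+1}, \ldots, I_{n-1,j+1})^T$. For $j \le n-2$ this column duplicates the original column $j+1$, yielding zero; only the $j = n-1$ term survives, giving the determinant of the matrix whose last column is $(I_{0,n}, I_{1,n}, \ldots, I_{n-1,n})^T$. Invoking the symmetry $I_{i,j} = I_{j,i}$ and taking a transpose, this determinant coincides with $\tilde{\tau}_n$.

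Adding the two contributions yields $\frac{d}{dt}\tau_n = \tilde{\tau}_n + \tilde{\tau}_n = 2\tilde{\tau}_n$, as claimed. There is no real obstacle here — the essential observation is that both the row-derivative and the column-derivative independently collapse to a single boundary term because of the cascade of row/column repetitions, and the symmetric reduction is precisely what makes those two boundary terms equal rather than distinct. This is a cleaner alternative to the Pfaffian-based derivation deferred to the Appendix.
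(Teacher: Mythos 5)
Your proof is correct: the entrywise differentiation of $\det(I_{i,j})_{i,j=0}^{n-1}$, the splitting via $\frac{d}{dt}I_{i,j}=I_{i+1,j}+I_{i,j+1}$, the collapse of all but the boundary terms (a shifted row or column duplicates an existing one), and the identification of the surviving column term with $\tilde{\tau}_n$ through $I_{i,j}=I_{j,i}$ and a transpose all check out. It is, however, a genuinely different route from the one the paper actually gives. The paper's own proof, deferred to the Appendix, first recasts $\tau_n=pf(0,\ldots,n-1,n-1^*,\ldots,0^*)$ and $\tilde{\tau}_n=pf(0,\ldots,n-2,n,n-1^*,\ldots,0^*)$ as Pfaffians with entries $pf(i,j^*)=I_{i,j}$, proves by induction a general derivative formula that shifts each starred and unstarred index in turn, and then observes that all shifted terms vanish except the two boundary ones, which coincide by the same symmetry $I_{i,j}=I_{j,i}$ that you invoke; your determinant argument is essentially the alternative proof the paper points to in \cite{CHL}. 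Your version is more elementary and makes the origin of the factor $2$ (one row contribution plus one column contribution, equal precisely because of the symmetric reduction) completely transparent. What the Pfaffian detour buys the authors is reusable machinery: the same Pfaffian representation and derivative formula are exactly what let them reduce the bilinear C-Toda system to standard Pfaffian identities in the second appendix proposition, which the determinant formulation does not directly provide.
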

\begin{rem}
Here we would like to mention that we will give a proof of Proposition 3.1 by Pfaffian techniques in Appendix. A different proof by determinant techniques can be referred to \cite{CHL}.
\end{rem}
With the help of this formula, we can reformulate $a_n$, $b_n$, $c_n$ and $d_n$ in terms of $\tau_n$ and $\sigma_n$ as
\begin{align*}
&a_n=-\frac{\sigma_{n+1}\tau_n}{\sigma_n\tau_{n+1}},\quad \quad \ \ b_n=-\frac{\sigma_{n+1}\tau_n}{\sigma_n\tau_{n+1}}+\frac{1}{2}\frac{d}{dt}\log{\frac{\tau_{n+2}}{\tau_{n+1}}},\\
&c_n=-\frac{\tau_n\tau_{n+2}}{\tau_{n+1}^2}+\frac{1}{2}\frac{\sigma_{n+1}\tau_n}{\sigma_n\tau_{n+1}}\frac{d}{dt}\log\frac{\tau_{n+1}}{\tau_n},\quad d_n=\frac{\sigma_{n+1}\tau_{n-1}}{\sigma_n\tau_n}.
\end{align*}

As integrable systems are the compatibility condition of spectral problem and time deformation, therefore, it is necessary for us to consider the time deformations of $p_n(x;t)$, for which we have the following proposition.
\begin{prop}
For the sCBOPs, it follows
\begin{align}\label{defor}
\frac{d}{dt}p_{n+1}(x;t)+a_n\frac{d}{dt}p_n(x;t)=a_n\frac{d}{dt}(\log h_n)p_n(x;t).
\end{align}
\end{prop}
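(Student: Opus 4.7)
The plan is to expand $\frac{d}{dt}p_{n+1}(x;t)+a_n\frac{d}{dt}p_n(x;t)$ in the basis $\{p_k(x;t)\}_{k\ge 0}$ and identify its coefficients by taking inner products against $\{q_m(y;t)\}$. Since $p_n$ is monic in $x$ of degree $n$, its time derivative is a polynomial of degree at most $n-1$; writing $\frac{d}{dt}p_m=\sum_{k=0}^{m-1}A_{m,k}\,p_k$ reduces the proposition to the two identities
$$A_{n+1,n}=a_n\frac{d}{dt}\log h_n,\qquad A_{n+1,k}+a_n A_{n,k}=0\ \text{ for }k<n.$$

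The key analytic ingredient will be a ``product rule with correction'' for the Cauchy bilinear form. Because $\alpha(x;t)=\exp(V(x)+xt)$ satisfies $\partial_t\alpha=x\alpha$, the factor $x+y$ produced by differentiating the two weights exactly cancels the Cauchy denominator, giving
$$\frac{d}{dt}\langle f,g\rangle=\left\langle\tfrac{df}{dt},g\right\rangle+\left\langle f,\tfrac{dg}{dt}\right\rangle+\left(\int_{\mathbb{R}_+}f(x)\alpha(x;t)dx\right)\left(\int_{\mathbb{R}_+}g(y)\alpha(y;t)dy\right).$$
I will apply this to $\langle p_{n+1},q_m\rangle=0$ for $m\le n$. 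Since $\frac{d}{dt}q_m$ has degree at most $m-1<n+1$ and $p_{n+1}$ is biorthogonal to every polynomial of degree $<n+1$, the mixed term $\langle p_{n+1},\frac{d}{dt}q_m\rangle$ vanishes, leaving $A_{n+1,m}h_m=-(\int p_{n+1}\alpha dx)(\int q_m\alpha dy)$. The analogous identity for $\frac{d}{dt}p_n$ reads $A_{n,m}h_m=-(\int p_n\alpha dx)(\int q_m\alpha dy)$ whenever $m<n$, and combining the two using the definition $a_n=-\int p_{n+1}\alpha dx/\int p_n\alpha dx$ immediately kills the lower-order coefficients.

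For the top coefficient, the same formula with $m=n$ gives $A_{n+1,n}h_n=-(\int p_{n+1}\alpha dx)(\int q_n\alpha dy)$. The symmetric reduction forces $p_n=q_n$ (since $I_{i,j}=I_{j,i}$ renders the determinantal formulas for $p_n$ and $q_n$ equal), so $\int q_n\alpha dy=\int p_n\alpha dx$, and a second use of the definition of $a_n$ rewrites this as $A_{n+1,n}=a_n(\int p_n\alpha dx)^2/h_n$. To finish, I differentiate $h_n=\langle p_n,p_n\rangle$ by the very same product rule: both inner-product terms $\langle\frac{d}{dt}p_n,p_n\rangle$ and $\langle p_n,\frac{d}{dt}p_n\rangle$ vanish because $\frac{d}{dt}p_n$ has degree strictly less than $n$, leaving $\frac{d}{dt}h_n=(\int p_n\alpha dx)^2$. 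Substituting back yields $A_{n+1,n}=a_n\,\frac{d}{dt}\log h_n$, exactly as required.

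The main obstacle is the degree-and-orthogonality bookkeeping: one must verify that $\langle p_{n+1},\frac{d}{dt}q_m\rangle=0$ throughout the full range $0\le m\le n$, and then track the surviving single-integral corrections so that the definition of $a_n$ really collapses the sub-leading terms. Every step rests on the delicate interaction between $\partial_t\alpha=x\alpha$ and the Cauchy kernel $1/(x+y)$, which is the same structural feature that will eventually force the $t$-deformation to produce a C-type rather than an A-type Toda flow.
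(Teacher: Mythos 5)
Your proof is correct, and it rests on the same structural fact as the paper's: differentiating the Cauchy pairing in $t$ turns the weight-derivative contribution into the rank-one product $\left(\int f\alpha\,dx\right)\left(\int g\alpha\,dy\right)$, which the definition of $a_n$ is designed to annihilate. The organization differs in two ways. First, the paper differentiates the single relation $\langle p_{n+1}+a_np_n,q_m\rangle=h_{n+1}\delta_{n+1,m}+a_nh_n\delta_{n,m}$, so that the correction term vanishes identically from the outset (this is the paper's remark that $\langle (x+y)(p_{n+1}+a_np_n),q_m\rangle=0$ for all $m$) and the orthogonality argument is run once on the combined derivative; you instead expand $\frac{d}{dt}p_{n+1}$ and $\frac{d}{dt}p_n$ separately and cancel the surviving corrections coefficient by coefficient, which is equivalent but slightly longer. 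Second, and more substantively, the paper pins down the leading coefficient from the $m=n$ case of the differentiated relation, obtaining $\Gamma_n=\frac{1}{h_n}\frac{d}{dt}(a_nh_n)$ and then absorbing the $\frac{da_n}{dt}p_n$ term when passing from $\frac{d}{dt}(p_{n+1}+a_np_n)$ to $\frac{d}{dt}p_{n+1}+a_n\frac{d}{dt}p_n$; you instead prove the auxiliary identity $\frac{d}{dt}h_n=\left(\int p_n\alpha\,dx\right)^2$ (which is essentially the first bilinear equation $D_t\tau_{n+1}\cdot\tau_n=\sigma_n^2$ of the C-Toda lattice) and invoke the symmetry $p_n=q_n$ of the reduced family. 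Your route makes the appearance of $\sigma_n^2$ visible already at this stage, at the cost of using the symmetric reduction explicitly where the paper's argument does not need it; both are complete and the degree-and-orthogonality bookkeeping you flag as the main obstacle does go through exactly as you describe.
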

\begin{proof}
Given the following equality
\begin{align*}
\langle p_{n+1}(x;t)+a_np_n(x;t),q_m(y;t)\rangle=h_{n+1}\delta_{n+1,m}+a_nh_n\delta_{n,m},
\end{align*}
differentiating it with respect to $t$, we have
\begin{align}
\left\langle\frac{d}{dt}(p_{n+1}(x;t)+a_np_n(x;t)),q_m(y;t)\right\rangle+&\left\langle p_{n+1}(x;t)+a_np_n(x;t),\frac{d}{dt}q_m(y;t)\right\rangle\nonumber\\
&\quad=\frac{d}{dt}h_{n+1}\delta_{n+1,m}+\frac{d}{dt}(a_nh_n)\delta_{n,m},\label{der}
\end{align}
where we have used the fact $$\langle (x+y)(p_{n+1}(x;t)+a_np_n(x;t)),q_m(y;t)\rangle=0,\quad \forall m\in\mathbb{Z}_{\geq0}.$$

It is obvious that the $t$-derivative of $q_m(y;t)$ is a polynomial of degree $m-1$ in variable $y$. In the case of $0\leq m\leq n-1$, both the right hand side of \eqref{der} and the second term in the left hand side of \eqref{der} are equal to zero.
Therefore, \eqref{der} becomes
\begin{equation}\label{Der1}
\left\langle\frac{d}{dt}(p_{n+1}(x;t)+a_np_n(x;t)),q_m(y;t)\right\rangle=0,\quad 0\leq m\leq n-1.
\end{equation}
Noticing the $t$-derivative of $p_{n+1}(x;t)+a_np_n(x;t)$ is of degree $n$, we can assume
\begin{align}\label{Der2}
\frac{d}{dt}(p_{n+1}(x;t)+a_np_n(x;t))=\sum_{i=0}^n \Gamma_ip_i(x;t),
\end{align}
By substituting \eqref{Der2} into \eqref{Der1}, we conclude that $\Gamma_i=0$ for $0\leq i\leq n-1$ which implies
\begin{equation*}
\frac{d}{dt}(p_{n+1}(x;t)+a_np_n(x;t))=\Gamma_np_n(x;t).
\end{equation*}

When $m=n$, we know that
\begin{align*}
\left\langle\frac{d}{dt}(p_{n+1}(x;t)+a_np_n(x;t)),q_n(y;t)\right\rangle=\Gamma_nh_n=\frac{d}{dt}(a_nh_n).
\end{align*}
Therefore, $\Gamma_n=\frac{1}{h_n}\frac{d}{dt}(a_nh_n)$ and
\begin{align*}
\frac{d}{dt}(p_{n+1}(x;t)+a_np_n(x;t))=\frac{1}{h_n}\frac{d}{dt}(a_nh_n)p_n(x;t),
\end{align*}
or equivalently,
\begin{align}\label{Der3}
\frac{d}{dt}p_{n+1}(x;t)+a_n\frac{d}{dt}p_n(x;t)=a_n\frac{d}{dt}(\log h_n)p_n(x;t).
\end{align}
\end{proof}

By comparing the coefficients of $x^n$ in \eqref{Der3}, we find that
\begin{align}\label{t1}
-\frac{1}{2}\frac{d^2}{dt^2}\log\tau_{n+1}=a_n\frac{d}{dt}(\log h_n)
\end{align}
Moreover, if we take $m=n+1$ in (\ref{der}) and it follows
\begin{align}\label{t2}
-\frac{1}{2}a_nh_n\frac{d^2}{dt^2}\log\tau_{n+1}=\frac{d}{dt}h_{n+1}.
\end{align}
Equations (\ref{t1}) together with (\ref{t2}) can be rewritten equivalently as
\begin{align}
\left\{
\begin{aligned}
&D_{t}\tau_{n+1}\cdot\tau_{n}=\sigma_{n}^2,\\
&D_{t}^2\tau_{n+1}\cdot\tau_{n+1}=4\sigma_{n+1}\sigma_{n},
\end{aligned}\right.\label{CT}
\end{align}
where Hirota's bilinear operator $D$ is defined by $D_t^nf(t)\cdot g(t)=\frac{\partial^n}{\partial s^n}f(t+s)g(t-s)|_{s=0}$. We call \eqref{CT} the C-Toda lattice and conclude the obtained results with the following propostion.
\begin{rem}
Here we would like to remark that this method for deriving integrable systems from orthogonal polynomials can be found in many other cases and one can refer to \cite{GMMMO} for more examples, which include the derivation of Lotka-Volterra lattice (or so-called Kac-van Moerbeke lattice) and Toeplitz lattice and so on.
\end{rem}
\begin{prop}
The C-Toda lattice
\begin{align*}
D_t\tau_{n+1}\cdot\tau_n=\sigma_n^2,\quad D_t^2\tau_{n+1}\cdot\tau_{n+1}=4\sigma_{n+1}\sigma_n
\end{align*}
admits the following determinant solutions
\begin{align}\label{tau}
\tau_n=\begin{vmatrix}I_{0,0}&\cdots&I_{0,n-1}\\
\vdots&&\vdots\\
I_{n-1,0}&\cdots&I_{n-1,n-1}\end{vmatrix},\quad\quad \sigma_n=\begin{vmatrix}I_{0,0}&\cdots&I_{0,n}\\ \vdots&&\vdots\\I_{n-1,0}&\cdots&I_{n-1,n}\\w_0&\cdots& w_n\end{vmatrix}
\end{align}
with the time evolutions
\begin{align}
\frac{d}{dt}I_{i,j}=I_{i+1,j}+I_{i,j+1}=\omega_i\omega_j,\quad \frac{d}{dt}\omega_i=\omega_{i+1}.
\end{align}
\end{prop}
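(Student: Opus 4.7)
The plan is to verify the two bilinear equations of \eqref{CT} by combining the algebraic manipulations carried out in Section 3.1 with the explicit determinant expressions for $\tau_n$ and $\sigma_n$. The elementary time evolutions follow directly from $\alpha(x;t)=\exp(V(x)+xt)$: differentiating gives $\partial_t\alpha=x\alpha$, so $\partial_tI_{i,j}=\iint_{\mathbb{R}_+^2}\frac{x^iy^j(x+y)}{x+y}\alpha(x;t)\alpha(y;t)dxdy=\omega_i\omega_j$ and $\partial_t\omega_i=\omega_{i+1}$ are immediate. Rewriting $x^iy^j(x+y)/(x+y)$ as $x^{i+1}y^j+x^iy^{j+1}$ also yields the consistency relation $I_{i+1,j}+I_{i,j+1}=\omega_i\omega_j$ used throughout.

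Next I would verify the first bilinear identity $D_t\tau_{n+1}\cdot\tau_n=\sigma_n^2$ by computing $h_n'$ directly from $h_n=\langle p_n(x;t),q_n(y;t)\rangle$. Under the symmetric reduction the bilinear form is symmetric in its two arguments, so by uniqueness of the monic biorthogonal sequence $p_n=q_n$; the determinantal expression for $p_n$ recorded in Section 2 then gives $\int_{\mathbb{R}_+}p_n(x;t)\alpha(x;t)dx=\sigma_n/\tau_n$. Differentiating $h_n$ produces three terms: the contributions from $\partial_tp_n$ and $\partial_tq_n$ vanish because these derivatives have degree at most $n-1$ and therefore pair to zero against $q_n$ and $p_n$ by biorthogonality, while the remaining term uses $\partial_t(\alpha(x;t)\alpha(y;t))=(x+y)\alpha(x;t)\alpha(y;t)$, which cancels the Cauchy kernel and leaves a factorized integral equal to $(\sigma_n/\tau_n)^2$. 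Clearing denominators in $h_n'=\sigma_n^2/\tau_n^2$ gives precisely the first equation.

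For the second bilinear equation, I would substitute $a_n=-\sigma_{n+1}\tau_n/(\sigma_n\tau_{n+1})$ and $(\log h_n)'=h_n'/h_n=\sigma_n^2/(\tau_n\tau_{n+1})$ into \eqref{t1}, producing $(\log\tau_{n+1})''=2\sigma_{n+1}\sigma_n/\tau_{n+1}^2$; since $(\log\tau_{n+1})''=D_t^2\tau_{n+1}\cdot\tau_{n+1}/(2\tau_{n+1}^2)$, the identity $D_t^2\tau_{n+1}\cdot\tau_{n+1}=4\sigma_{n+1}\sigma_n$ follows at once.

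The main obstacle is the identification $p_n=q_n$ under the symmetric reduction: it is precisely this symmetry which makes the factorized integral $\bigl(\int p_n\alpha\,dx\bigr)\bigl(\int q_n\alpha\,dy\bigr)$ a perfect square $\sigma_n^2/\tau_n^2$, and this is the algebraic source of the characteristic C-type structure of the hierarchy. A more combinatorial alternative, in the spirit of the appendix, would be to rewrite $\tau_n$ and $\sigma_n$ as Pfaffians and verify \eqref{CT} via Plücker/Pfaffian identities, bypassing the orthogonal-polynomial argument entirely; this is heavier computationally but self-contained.
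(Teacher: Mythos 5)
Your argument is correct, but it is not the proof the paper actually gives for this proposition: the paper's designated proof is the Pfaffian computation in the Appendix, where $\tau_n=pf(0,\dots,n-1,n-1^*,\dots,0^*)$ and $\sigma_n=(-1)^npf(d_0,0,\dots,n-1,n^*,\dots,0^*)$, the derivatives $\tau_{n,t}$, $\tau_{n,tt}$ are computed from the Pfaffian differentiation rule, and the two bilinear equations are recognized as special cases of the standard Pfaffian (Pl\"ucker-type) identities. You instead flesh out the orthogonal-polynomial derivation of Section 3.1: your direct computation $h_n'=\bigl(\int p_n\alpha\,dx\bigr)^2/1=\sigma_n^2/\tau_n^2$ (using $p_n=q_n$, the vanishing of $\langle\dot p_n,q_n\rangle$, and $\partial_t(\alpha(x)\alpha(y))=(x+y)\alpha(x)\alpha(y)$ cancelling the Cauchy kernel) yields $D_t\tau_{n+1}\cdot\tau_n=\sigma_n^2$ cleanly, and in fact supplies a step that Section 3.1 leaves implicit when it passes from \eqref{t1}--\eqref{t2} to \eqref{CT}; substituting $a_n$ and $(\log h_n)'$ into \eqref{t1} then gives the second equation. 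The trade-offs: your route is conceptually transparent about where the C-type structure comes from (the perfect square $\sigma_n^2/\tau_n^2$ from $p_n=q_n$), but it presupposes Propositions 3.1 and 3.2 (in particular $\dot\tau_{n+1}=2\tilde\tau_{n+1}$, needed to identify the subleading coefficient of $p_{n+1}$ with $-\tfrac12(\log\tau_{n+1})'$) and it divides by $\sigma_n$ and $\tau_n$, so strictly it establishes the bilinear identities only where these are nonzero, with an extension by genericity/polynomiality in the moments needed to conclude in general. The paper's Pfaffian proof is self-contained and purely algebraic, requiring no well-posedness of the sCBOPs and no division, at the cost of being less illuminating. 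Both are valid; you might add one sentence addressing the nonvanishing issue if you keep your route.
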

A direct proof of this proposition will be given in Appendix. Indeed, the function $\sigma_n$ in the C-Toda lattice (\ref{CT}) is indeed an auxiliary function which can be eliminated and equation \eqref{CT} is actually characterized by the function $\tau_n$ and is governed by the following single equation
\begin{align*}
16(D_t\tau_{n+2}\cdot\tau_{n+1})(D_t\tau_{n+1}\cdot\tau_n)=(D_t^2\tau_{n+1}\cdot\tau_{n+1})^2 .
\end{align*}

\begin{rem}
It should be mentioned that the motivation of derivation C-Toda lattice doesn't only lie in the connection with sCBOPs, but in the study of the positive flow of Degasperis-Procesi peakon equation. A detailed discussion about the relationship between C-Toda lattice and Degasperis-Procesi equation can be referred to \cite{CHL}.
\end{rem}

\subsection{Lax pair}
As an important integrable property, Lax pair of the C-Toda lattice \eqref{CT} is necessary to be given from the point of view of orthogonal polynomials. Usually, Lax pair can be always given by the four-term recurrence relationship (\ref{rr}) and the time deformations of sCBOPs (\ref{defor}). By setting $\phi_n=p_n(0;t)$, the Lax pair of the C-Toda lattice can be written as
\begin{align}\label{lax1}\left\{\begin{aligned}
&\phi_{n+2}+b_n\phi_{n+1}+c_n\phi_n+d_n\phi_{n-1}=0,\\
&\frac{d}{dt}\phi_{n+1}+a_n\frac{d}{dt}\phi_n=a_n\frac{d}{dt}{\log h_n}\phi_n.
\end{aligned}\right.
\end{align}
Unfortunately, this form of Lax pair is difficult for us to compute its compatibility condition. It pushes us to find a suitable time evolution part.
\begin{prop}
There exists a mixed spectral transformation of $p_n(x,t)$ as
\begin{align}\label{transf}
\frac{d}{dt}p_{n}(x;t)=p_{n+1}(x;t)-(x+a_{n-1}-b_{n-1})p_n(x;t)+\frac{d_n}{a_n}p_{n-1}(x;t).
\end{align}
\end{prop}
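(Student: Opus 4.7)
The plan is to define $\psi_n(x;t) := p_{n+1}-(x+a_{n-1}-b_{n-1})p_n+(d_n/a_n)p_{n-1}$ and verify $\psi_n = \frac{d}{dt}p_n$ by reducing to a check in the biorthogonal framework of Section 2. First I would confirm that both sides lie in the space of polynomials of degree at most $n-1$: for $\frac{d}{dt}p_n$ this is automatic since $p_n$ is monic, while for $\psi_n$ the $x^{n+1}$ coefficient cancels trivially and the $x^n$ coefficient vanishes precisely under the identity $a_{n+1,1}-a_{n,1}=a_{n-1}-b_{n-1}$ (where $a_{k,1}$ denotes the subleading coefficient of $p_k$), which follows from matching $x^{n+1}$ terms in the four-term recurrence \eqref{rr}. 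Once this degree bound is secured, biorthogonality $\langle p_k,q_m\rangle=h_k\delta_{k,m}$ reduces the claim to showing $\langle \psi_n,q_m\rangle=\langle \frac{d}{dt}p_n,q_m\rangle$ for each $0\le m\le n-1$.

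To compute $\langle \frac{d}{dt}p_n,q_m\rangle$, I would differentiate the orthogonality $\langle p_n,q_m\rangle=h_n\delta_{n,m}$ with respect to $t$. Since $\frac{d}{dt}\alpha=x\alpha$, the explicit $t$-derivative of the bilinear form produces a factor $(x+y)$ in the numerator that cancels the denominator $x+y$, so this contribution equals $(\int p_n\alpha)(\int q_m\alpha)=\sigma_n\sigma_m/(\tau_n\tau_m)$. The term $\langle p_n,\frac{d}{dt}q_m\rangle$ vanishes for $m<n$ because $\frac{d}{dt}q_m$ has degree $\le m-1$, yielding $\langle \frac{d}{dt}p_n,q_m\rangle = -\sigma_n\sigma_m/(\tau_n\tau_m)$ for $m<n$. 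For $\langle \psi_n,q_m\rangle$, the only delicate piece is $\langle xp_n,q_m\rangle$, which I would handle via the identity $\langle xf,g\rangle=(\int f\alpha)(\int g\alpha)-\langle f,yg\rangle$ obtained from $x=(x+y)-y$.

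Two subcases then arise. When $m<n-1$, the contributions from $p_{n+1}$ and $p_{n-1}$ vanish by biorthogonality, and $\langle p_n,yq_m\rangle=0$ since $yq_m$ has degree $\le m<n$, so $\langle \psi_n,q_m\rangle$ collapses to $-\sigma_n\sigma_m/(\tau_n\tau_m)$ and matches immediately. When $m=n-1$, the piece $\langle p_n,yq_{n-1}\rangle=h_n$ survives because $yq_{n-1}=q_n+\cdots$, producing an extra $+h_n$ that must be compensated by the $(d_n/a_n)h_{n-1}$ term coming from $(d_n/a_n)p_{n-1}$. This forces exactly $d_n/a_n=-h_n/h_{n-1}$, which is verified from the explicit formulas $a_n=-\sigma_{n+1}\tau_n/(\sigma_n\tau_{n+1})$, $d_n=\sigma_{n+1}\tau_{n-1}/(\sigma_n\tau_n)$ and $h_n=\tau_{n+1}/\tau_n$ by a direct substitution. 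The main obstacle is precisely this $m=n-1$ compatibility check: it is the computation that fixes the coefficient $d_n/a_n$ in the ansatz and confirms why the mixed spectral transformation must take the specific form stated, rather than an arbitrary combination of $p_{n+1}$, $p_n$, $p_{n-1}$.
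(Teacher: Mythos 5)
Your proposal is correct, and it reaches the identity by a route that is organized differently from the paper's. The paper \emph{derives} the right-hand side: starting from $\frac{d}{dt}\langle p_n,q_k\rangle=0$ it uses the defining property $\int_{\mathbb{R}_+}(p_{n+1}+a_np_n)\alpha\,dx=0$ of $a_n$ to rewrite $\langle p_n,yq_k\rangle$ as $-\frac{1}{a_n}\langle x(p_{n+1}+a_np_n),q_k\rangle$, then substitutes the four-term recurrence so that the combination $p_{n+1}-(x+a_{n-1}-b_{n-1})p_n+\frac{d_n}{a_n}p_{n-1}$ emerges automatically, uniformly in $k<n$, before invoking the same degree-$(n-1)$ nondegeneracy argument you use. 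You instead \emph{verify} the stated ansatz by pairing both sides with $q_m$, splitting into $m<n-1$ and $m=n-1$, and checking the single nontrivial compatibility $d_n/a_n=-h_n/h_{n-1}=-\tau_{n+1}\tau_{n-1}/\tau_n^2$ from the explicit formulas for $a_n$, $d_n$ and $h_n$; your degree bound via $a_{n+1,1}-a_{n,1}=a_{n-1}-b_{n-1}$ (obtained from the recurrence with index shifted to $n-1$) is exactly the fact the paper leaves implicit in "it is not difficult to see". Both arguments rest on the same two mechanisms — the $(x+y)$ cancellation $\langle xf,g\rangle=(\int f\alpha)(\int g\alpha)-\langle f,yg\rangle$ and nondegeneracy of the pairing on degree $\le n-1$ — but yours makes the role of the coefficient $d_n/a_n$ transparent at the cost of relying on the determinant expressions from \eqref{CT}, whereas the paper's manipulation explains where the whole combination comes from without them. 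Two cosmetic slips worth fixing: $yq_m$ has degree $m+1$ (not $\le m$), which is why the case split occurs at $m=n-1$; and the coefficient identity comes from matching $x^{n}$ (equivalently, applying \eqref{rr} at level $n-1$), not literally the $x^{n+1}$ terms of \eqref{rr} at level $n$.
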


\begin{proof}
Consider the equality
\begin{align*}
\langle p_n(x;t),q_k(y;t)\rangle=0,\quad k<n
\end{align*}
and differentiate it with respect to $t$, we get
\begin{align}\label{inre}
\left\langle\frac{d}{dt}{p}_n(x;t),q_k(y;t)\right\rangle+\left\langle xp_n(x;t),q_k(y;t)\right\rangle+\left\langle p_n(x;t),yq_k(y;t)\right\rangle=0.
\end{align}

Since $\langle p_{n+1}(x;t),yq_k(y;t)\rangle=0$ for $k<n$, we have
\begin{align*}
\langle p_n(x;t),yq_k(y;t)\rangle&=\frac{1}{a_n}\langle p_{n+1}(x;t)+a_np_n(x;t),yp_k(y;t)\rangle=-\frac{1}{a_n}\langle x(p_{n+1}(x;t)+a_np_n(x;t)),q_k(y;t)\rangle,
\end{align*}
from which we have
\begin{align*}
&\left\langle \frac{d}{dt}{p}_n(x;t)+xp_n(x;t)-\frac{x}{a_n}(p_{n+1}(x;t)+a_np_n(x;t)),q_k(y;t)\right\rangle=\left\langle \frac{d}{dt}({p}_n(x;t))-\frac{x}{a_n}p_{n+1}(x;t),q_k(y;t)\right\rangle=0.
\end{align*}

By using the four-term recurrence relationship and orthogonality, we have
\begin{align*}
&\left\langle \frac{d}{dt}p_n(x;t)-\frac{x}{a_n}p_{n+1}(x;t),q_k(y;t)\right\rangle\\
&=\left \langle \frac{d}{dt}p_n(x;t)-\frac{1}{a_n}(xp_{n+1}(x;t)-p_{n+2}(x;t)),q_k(y;t)\right\rangle\\
&=\left\langle \frac{d}{dt}p_n(x;t)-\frac{1}{a_n}(b_np_{n+1}(x;t)+c_np_n(x;t)+d_np_{n-1}(x;t)-a_nxp_n(x;t)),q_k(y;t)\right\rangle\\
&=\left\langle \frac{d}{dt}p_n(x;t)+xp_n(x;t)-\frac{d_n}{a_n}p_{n-1}(x;t),q_k(y;t)\right\rangle\\
&=\left\langle \frac{d}{dt}p_n(x;t)+xp_n(x;t)-p_{n+1}(x;t)+(a_{n-1}-b_{n-1})p_n(x;t)-\frac{d_n}{a_n}p_{n-1}(x;t),q_k(y;t)\right\rangle\\
&=0.
\end{align*}

Assume $f(x)$ is a polynomial in variable $x$ and $\mbox{deg}(f(x))\leq n-1$ and satisfies $\langle f(x),q_k(y)\rangle=0$ for arbitrary $0\leq k\leq n-1$, then it is sufficient for us to have $f(x)=0$. From the last step, it is not difficult to see that the first expression in the inner product is of degree $(n-1)$ at most, thus we have
\begin{align}
\frac{d}{dt}p_n(x;t)=p_{n+1}(x;t)-(x+a_{n-1}-b_{n-1})p_n(x;t)+\frac{d_n}{a_n}p_{n-1}(x;t).\label{TP}
\end{align}
\end{proof}
\begin{rem}
Indeed, the Lax pair of the form (\ref{lax1}) can construct a matrix form of Lax form which is exhibited in \cite{CHL}. Only in the polynomials form (or so-called matrix form) can it be used to derive the compatibility of C-Toda lattice. In the wave-function form, it is so hard for one to compute its compatibility condition.
\end{rem}
\begin{rem}
In the proposition, we call the transformation (\ref{transf}) is a mixed spectral problem since it is not only a spectral problem but also evolves the time deformation part. Moreover, many types of orthogonal polynomials have this kind of mixed spectral problem. In the case of skew-orthogonal polynomials (SOPs), the recurrence relationship of SOPs is firstly given as a mixed spectral problem \cite{AHM}.
\end{rem}

By choosing $x=0$ and $\phi_n=p_n(0)$ in \eqref{rr} and \eqref{TP}, we are able to derive the Lax pair for the C-Toda lattice \eqref{CT}:
\begin{eqnarray}\left\{\begin{aligned}
&\phi_{n+2}+b_n\phi_{n+1}+c_n\phi_n+d_n\phi_{n-1}=0,\\
&\phi_{n,t}=\phi_{n+1}+(b_{n-1}-a_{n-1})\phi_n+\frac{d_n}{a_n}\phi_{n-1}.\label{LP}
\end{aligned}\right.
\end{eqnarray}
Denote the column vector $\Phi_n=(\phi_{n+1},\phi_n,\phi_{n-1})^t$. The Lax pair \eqref{LP} of the C-Toda lattice  can be rewritten in the following matrix form
\begin{align*}
\Phi_{n+1}=A_n\Phi_n,\quad \frac{d}{dt}{\Phi}_n=B_n\Phi_n,
\end{align*}
with $A_n$ and $B_n$ being $3\times3$ matrices
\begin{eqnarray*}
A_n=\begin{pmatrix}-b_n&-c_n&-d_n\\1&0&0\\0&1&0\end{pmatrix},\quad B_n=
\begin{pmatrix}
-a_n&\frac{d_{n+1}}{a_{n+1}}-c_n&-d_n\\
1&b_{n-1}-a_{n-1}&\frac{d_n}{a_n}\\
-\frac{1}{a_{n-1}}&1-\frac{b_{n-1}}{a_{n-1}}&b_{n-2}-a_{n-2}-\frac{c_{n-1}}{a_{n-1}}
\end{pmatrix}.
\end{eqnarray*}

\section{Cauchy two-matrix model and CKP hierarchy}
Toda equation, which plays an important role in modern physics, exhibits the most general algebraic and geometric structures in integrable theory.  Several important facts have been revealed in \cite{Adler95,GMMMO,MMM}. In these work, the authors demonstrated the partition function of the random matrix generated by Hermite ensemble gives rise to the KP/Toda hierarchy together with their Virasoro algebra. These work reflects that there exists a deep connection between random matrix theory/matrix model and integrable systems. In our previous work, we have also found partition functions of Gaussian orthogonal or symplectic ensembles and Bures ensemble can act as matrix integrals solutions to DKP hierarchy and BKP hierarchy, respectively. A brief review can be found in \cite{HL} and in this part we will develop along this line further and give another example of matrix models related to a hierarchy of integrable systems.

\subsection{Matrix integral solutions to the C-Toda lattice}
Before we proceed to show that the partition function of the Cauchy two-matrix model plays the role of the $\tau$-function to the CKP hierarchy, we will first find matrix integral solutions to the C-Toda lattice.
Recall that the bi-moments $I_{i,j}$ and single moments $w_i$ are
\begin{align*}
I_{i,j}=\iint_{\mathbb{R}_+^2}\frac{x^iy^j}{x+y}d\rho(x;t)d\rho(y;t),\quad \omega_i=\int_{\mathbb{R}_+}x^id\rho(x;t).
\end{align*}
According to Heine's formula \cite{Mehta81,Mehta04}, we see that $\tau_n$ and $\sigma_n$ in (\ref{tau}) can be expressed in terms of matrix integrals \cite{BGS1}, i.e.
\begin{align}
\tau_n=\det(I_{i,j})_{i,j=0,\cdots,n-1}&=\sum_{\sigma\in S(n)}\epsilon(\sigma)\int_{\mathbb{R}_+^{2n}}\prod_{j=1}^nx_j^{\sigma_j-1}y_j^{j-1}\frac{1}{x_j+y_j}d\rho(x_j;t)d\rho(y_j;t)\nonumber\\
&=\frac{1}{n!}\int_{\mathbb{R}_+^{2n}}\Delta_n(X)\Delta_n(Y)\prod_{j=1}^n\frac{d\rho(x_j;t)d\rho(y_j;t)}{x_j+y_j}\nonumber\\
&=\frac{1}{(n!)^2}
\int_{\mathbb{R}_+^{2n}}\Delta_n(X)\Delta_n(Y)\det\left(\frac{1}{x_i+y_j}\right)_{1\le i,j\le n}\prod_{i,j=1}^nd\rho(x_i;t)d\rho(y_j;t)\nonumber\\
&=\int_{\mathbb{T}}(\Delta_n(X)\Delta_n(Y))^2\prod_{i,j=1}^n\frac{1}{x_i+y_j} d\rho(x_i;t)d\rho(y_j;t),\label{heine}
\end{align}
where $\epsilon(\sigma)$ denotes the sign of the permutation $\sigma$, $\Delta_n(X)=\prod_{i<j}(x_i-x_j)$ and the integral region $\mathbb{T}\subset\mathbb{R}_+^n\times\mathbb{R}_+^n$ is given by $$\mathbb{T}=\{(x_1,\cdots,x_n;y_1,\cdots,y_n)|0<x_1<\cdots<x_n,0<y_1<\cdots<y_n\}.$$

In a similar manner, one can get
\begin{align*}
\sigma_n=
\int_{\mathbb{T}'}(\Delta_{n+1}(X)\Delta_n(Y))^2\prod_{i=1}^{n+1}\prod_{j=1}^n\frac{1}{x_i+y_j}d\rho(x_i;t)d\rho(y_j;t),
\end{align*}
and the integral region is a subset of $\mathbb{R}_+^{n+1}\times\mathbb{R}_+^n$ such that
$$\mathbb{T}'=\{(x_1,\cdots,x_{n+1};y_1,\cdots,y_n)|0<x_1<\cdots<x_{n+1},0<y_1<\cdots<y_n\}.$$

\subsection{Matrix integral solutions to the CKP hierarchy}
As is indicated in last subsection, the solution of the C-Toda lattice can be expressed as matrix integrals.
From the viewpoint of $\tau$-function theory, it is known that the $\tau$-function of Toda hierarchy is the most general $\tau$-function of each integral sytem in the corresponding hierarchy \cite{AZ}. Therefore, it is natural for us to think whether we can generalize the $\tau$-function of the C-Toda lattice to the one of the CKP hierarchy. The answer is positive. We find that the C-Toda lattice admits a $\tau$-function which is also the $\tau$-function of the CKP hierarchy and this is the reason that why we call equation (\ref{CT}) the Toda of CKP type.

Firstly, we would like to show that $\tau_n$ given before satisfies the CKP equation which is the first member of the CKP hierarchy. Consider the CKP equation
\begin{align}
&(D_{t_1}^4-4D_{t_1}D_{t_3})f\cdot f+6fg=0,\nonumber\\
&(D_{t_1}^6+144D_{t_1}D_{t_5}-80D_{t_3}^2-20D_{t_1}^3D_{t_3})f\cdot f-90 D_{t_1}^2 f\cdot g=0,\label{CKP}
\end{align}
where $g$ is actually an auxiliary function which can be eliminated from the first equation. In other words, $f$ plays the role of $\tau$-function in the CKP equation.
In what follows, we will demonstrate the CKP equation owns matrix integral solutions.
\begin{prop}\label{PPCKP}
The equation (\ref{CKP}) admits the following solution
\begin{align*}
f=\det(A)=\det\left(\int_{-\infty}^{t_1}\phi_i\phi_jd{t_1}\right)_{0\leq i,j\leq N-1},\quad g=2\left(\begin{vmatrix}0 &\Phi^T\\-\Phi_{t_1t_1t_1}& A\end{vmatrix}-\begin{vmatrix}0&\Phi_{t_1}^T\\-\Phi_{t_1t_1}&A\end{vmatrix}+\begin{vmatrix}0&0&\Phi_{t_1}^T\\0&0&\Phi^T\\-\Phi_{t_1}&-\Phi&A\end{vmatrix}\right),
\end{align*}
with $
\Phi^T=(\phi_0,\phi_1,\cdots,\phi_{N-1})$ and each $\phi_i$ $(i=0,1,\cdots,N-1)$ being an arbitrary function satisfying the dispersion relations
\begin{align*}
\frac{\partial \phi_i}{\partial t_3}=\frac{\partial^3 \phi_i}{\partial t_1^3},\quad\frac{\partial \phi_i}{\partial t_5}=\frac{\partial^5 \phi_i}{\partial t_1^5}.
\end{align*}
\end{prop}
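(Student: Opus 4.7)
The plan is to verify the bilinear equations (\ref{CKP}) directly, exploiting the Grammian structure of $A=(A_{ij})$ with $A_{ij}=\int^{t_1}\phi_i\phi_j\,dt_1$ together with the dispersion relations $\partial_{t_{2k+1}}\phi_i=\partial_{t_1}^{2k+1}\phi_i$. The first step is to write all needed derivatives of $f=\det(A)$ as bordered determinants. Since $\partial_{t_1}A=\Phi\Phi^T$ is a symmetric rank-one update, the block-determinant identity
\begin{equation*}
\begin{vmatrix}0 & v^T \\ -u & A\end{vmatrix}=\det(A)\,v^T A^{-1} u
\end{equation*}
immediately yields $\partial_{t_1}f=\begin{vmatrix}0 & \Phi^T \\ -\Phi & A\end{vmatrix}$. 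Iterating—either advancing a border by one via $\partial_{t_1}\Phi=\Phi_{t_1}$, or absorbing a new rank-one contribution into the body—one can express every $\partial_{t_1}^k f$ for $k\le 6$ as an alternating sum of singly- and multiply-bordered determinants of $A$.

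Next, I would handle the higher-time flows via integration by parts inside the integral defining $A$. The dispersion relation $\partial_{t_3}\phi_i=\partial_{t_1}^3\phi_i$ yields
\begin{equation*}
\partial_{t_3}A=\Phi_{t_1t_1}\Phi^T+\Phi\Phi_{t_1t_1}^T-\Phi_{t_1}\Phi_{t_1}^T,
\end{equation*}
a symmetric rank-three update. Through the trace formula $\partial_{t_3}\log\det(A)=\mathrm{tr}(A^{-1}\partial_{t_3}A)$ combined with the block-determinant identity above, this writes $\partial_{t_3} f$ as a combination of bordered determinants whose border vectors are exactly of the shape appearing in the stated $g$ (including the doubly-bordered summand, which is produced by the mixed cross term $\Phi_{t_1t_1}\Phi^T+\Phi\Phi_{t_1t_1}^T$). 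An analogous but more laborious calculation, starting from $\partial_{t_5}\phi_i=\partial_{t_1}^5\phi_i$, expresses $\partial_{t_5}A$ as a symmetric combination $\sum c_{k\ell}\Phi_{t_1^k}\Phi_{t_1^\ell}^T$ with $k+\ell=4$, and hence $\partial_{t_5} f$ in the same border-determinant language. All mixed derivatives appearing in (\ref{CKP}) then follow.

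The final step is algebraic collapse. Assembling $(D_{t_1}^4-4D_{t_1}D_{t_3})f\cdot f$ from the formulas above presents it as a sum of products of two bordered determinants sharing the common body $A$; such products are controlled by the Jacobi/Plücker identity for the bordered Grammian, and applying this identity rewrites the sum as a single bordered-determinant combination that should match $-6fg$ with the stated $g$. The second equation of (\ref{CKP}) is attacked by the same template, now combining $D_{t_1}^6 f\cdot f$, $D_{t_3}^2 f\cdot f$, $D_{t_1}^3 D_{t_3} f\cdot f$ and $D_{t_1}D_{t_5}f\cdot f$. The main obstacle lies in this final step applied to the second identity: $D_{t_1}^6 f\cdot f$ expands into many bordered-determinant cross-terms, and identifying the specific Plücker relation that cancels each pair is delicate bookkeeping. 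Throughout, the symmetry $A=A^T$ inherited from the Grammian is essential—it forces the auxiliary $g$ to come out symmetric and is precisely the algebraic signature of the CKP reduction, distinguishing this construction from the Wronskian solutions familiar in generic KP.
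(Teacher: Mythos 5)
You should first note that the paper does not actually prove this proposition: immediately after the statement it writes ``This proposition has been proven by Wang in \cite{W}'' and moves on, so the only proof on record is the external citation. Your direct Grammian verification is therefore a genuinely different route from what the paper offers, though it is the standard one for such statements --- essentially what the cited reference does, and the same Pfaffian/determinant-identity machinery the paper itself deploys in its Appendix for the C-Toda case. Your preparatory computations check out: $\partial_{t_1}A=\Phi\Phi^T$, the Schur-complement identity giving $\partial_{t_1}f$ as a singly bordered determinant, and the integration-by-parts formulas $\partial_{t_3}A=\Phi_{t_1t_1}\Phi^T+\Phi\Phi^T_{t_1t_1}-\Phi_{t_1}\Phi^T_{t_1}$ and $\partial_{t_5}A=\sum_{k+\ell=4}c_{k\ell}\,\Phi_{t_1^k}\Phi_{t_1^\ell}^T$ are correct (modulo vanishing of the boundary terms at $-\infty$, which you should state as a hypothesis on the $\phi_i$). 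Two caveats. First, a misattribution: since $\partial_{t_3}f=\det(A)\operatorname{tr}(A^{-1}\partial_{t_3}A)$ is linear in the rank-one pieces of $\partial_{t_3}A$, it is a sum of \emph{singly} bordered determinants only; the doubly bordered summand of $g$ is not ``produced by the mixed cross term'' of $\partial_{t_3}A$ but arises later, from the Jacobi identity applied to quadratic combinations such as $f\,\partial_{t_1}^2f-(\partial_{t_1}f)^2$. Second, the decisive step --- the Pl\"ucker/Jacobi collapse --- is only gestured at, and for the second bilinear equation it is heavier than you indicate: that equation contains the term $D_{t_1}^2f\cdot g$ in which $g$ is itself a sum of bordered determinants, so you also need derivative formulas for the bordered objects, not just for $f$. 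As a plan this is sound and would succeed; as written, it stops exactly where the real bookkeeping (and the cited reference) begins.
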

This proposition has been proven by Wang in \cite{W}. Let us take $\phi_i$ $(i=0,1,\cdots,N-1)$ to be a single moment with the following time-dependent weight function
\begin{align}\label{sm}
\phi_i=\int_{\mathbb{R}_+}x^i\alpha(x;t)dx,\quad \alpha(x;t)=\exp(V(x)+xt)=\exp(V(x)+\sum_{k=1,3,5} t_kx^k).
\end{align}
Here $V(x)$ is required to be indepedent of $t$ and is chosen to ensure the convergency of the integral. Then the elements $a_{i,j}$ in the matrix $A$ can be rewritten as
\begin{align*}
a_{i,j}=\int_{-\infty}^{t_1}\phi_i\phi_jdt_1=\int_{-\infty}^{t_1}\left(\iint_{\mathbb{R}_+^2}x^iy^j\alpha(x;t)\alpha(y;t)dxdy\right)dt=\iint_{\mathbb{R}_+^2}\frac{x^iy^j}{x+y}\alpha(x;t)\alpha(y;t)dxdy,
\end{align*}
which is nothing but the bimoments of sCBOPs. Therefore, by using the Heine's formula (\ref{heine}), the $\tau$-function $f$ can be transformed into matrix integrals. In this sense, we have obtained matrix integral solutions to the CKP equation.

From Sato theory, we know that $\tau$-functions of a hierarchy of soliton equations share similar properties of its members. Next, we would like to show how to construct matrix integral solutions to the CKP hierarchy starting from the CKP equation.
\begin{prop}\label{JM}
If we introduce the neutral bose fields $\xi_j(j\in\mathbb{Z})$ such that
\begin{align*}
[\xi_i,\xi_j]=(-1)^j\delta_{i+j,-1},
\end{align*}
where $\xi_i$ ($i<0$) are the annihilation operators and $\xi_i$ ($i\geq0$) are the creation operators, then the Hamiltonian
\begin{align*}
H(t_+)=\frac{1}{2}\sum_{i=1,odd}^{+\infty}\sum_{j\in\mathbb{Z}}(-1)^{j-1}t_i\xi_j\xi_{-i-j-1}
\end{align*}
will give the $\tau$-function of CKP hierarchy by
\begin{align*}
\tau(t)=\langle e^{H(t_+)}g\rangle^{-2}
\end{align*}
with arbitrary group-like elements $g$.
\end{prop}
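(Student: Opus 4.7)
The plan is to realize the neutral boson algebra concretely on a Fock space, build the vertex operators that encode the CKP time flows, and then derive the Hirota bilinear identity characterizing the CKP hierarchy from the group-like property of $g$. This is the neutral-boson analogue of the standard DJKM construction used for KP and BKP.

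First I would fix the representation theory. Let $\mathcal{F}$ be the Fock module for the bosonic Heisenberg-type algebra generated by $\{\xi_j\}_{j\in\mathbb{Z}}$ with the given relations $[\xi_i,\xi_j]=(-1)^j\delta_{i+j,-1}$, with vacuum $|0\rangle$ characterized by $\xi_i|0\rangle=0$ for $i<0$, and dual vacuum $\langle 0|$ with $\langle 0|\xi_i=0$ for $i\geq 0$. One checks that the quadratic expressions $\xi_j\xi_{-i-j-1}$ appearing in $H(t_+)$ are well-defined on $\mathcal{F}$ after (normal) ordering, and that the odd-indexed components $J_i = \sum_{j\in\mathbb{Z}}(-1)^{j-1}\xi_j\xi_{-i-j-1}$ form a commuting family of Heisenberg currents with $[J_i,J_k]\propto i\delta_{i+k,0}$, so that $H(t_+)=\tfrac{1}{2}\sum_{i\text{ odd}}t_i J_i$ generates an abelian flow on $\mathcal{F}$.

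Next I would construct the vertex operators. By the standard neutral boson calculation, the generating series $\xi(k)=\sum_j\xi_j k^j$ and its partner satisfy operator product relations that translate, after combining with $e^{H(t_+)}$, into vertex operators $X(k)$ of the form $\exp(2\sum_{i\text{ odd}}t_ik^i)\exp(-2\sum_{i\text{ odd}}i^{-1}k^{-i}\partial_{t_i})$. The factor of $2$ here, which is intrinsic to the neutral-boson normalization and to the identification of $J_i$ with the CKP currents, is precisely what makes the vacuum expectation appear with the exponent $-2$: computing $\langle e^{H(t_+)}\xi(k)g\xi(-k')|0\rangle$ and extracting the residue at $k=k'$ produces $\tau(t)^{-2}$ rather than $\tau(t)^2$ (up to conventional sign choices), so I would define $\tau(t):=\langle 0|e^{H(t_+)}g|0\rangle^{-2}$ for later consistency with the CKP bilinear equation.

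Third, I would derive the CKP bilinear identity. By definition, a group-like element $g$ is one built as an exponential of bilinear expressions in the $\xi_j$ preserving the bilinear form $(-1)^j\delta_{i+j,-1}$, i.e.\ an element of the group associated with the Lie algebra $\{\sum a_{ij}\xi_i\xi_j : a_{ji}=(-1)^{i+j+1}a_{ij}\}$. For such $g$ one has the tensor identity
\[
\Bigl(\sum_{j\in\mathbb{Z}}(-1)^{j}\xi_j\otimes\xi_{-j-1}\Bigr)(g\otimes g)=(g\otimes g)\Bigl(\sum_{j\in\mathbb{Z}}(-1)^{j}\xi_j\otimes\xi_{-j-1}\Bigr),
\]
which is the invariance of the Casimir. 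Sandwiching between $\langle 0|e^{H(t_+)}\otimes\langle 0|e^{H(t'_+)}$ and using the vertex operator form of $\xi(k)$ rewrites the left and right sides as a contour integral in $k$, yielding the CKP Hirota bilinear equation
\[
\oint\frac{dk}{2\pi i k}\tau\!\left(t-2[k^{-1}]\right)\tau\!\left(t'+2[k^{-1}]\right)e^{\sum_{i\text{ odd}}(t_i-t'_i)k^i}=\tau(t)\tau(t'),
\]
where the shift $2[k^{-1}]=(2/k,\,2/(3k^3),\ldots)$ and the exponent $-2$ in the definition of $\tau$ conspire to give exactly the CKP signature of factor $2$ in the shifts. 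This identity is, by the DJKM characterization, equivalent to $\tau$ being a $\tau$-function of the CKP hierarchy.

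The main obstacle, in my view, is not the abstract chain above but keeping the signs and the factors of $2$ straight throughout, because the neutral-boson commutator $(-1)^j\delta_{i+j,-1}$ is nonstandard compared with KP's Heisenberg relation $[\alpha_m,\alpha_n]=m\delta_{m+n,0}$ and it is precisely this modification that forces the power $-2$ in the formula $\tau=\langle e^{H(t_+)}g\rangle^{-2}$. Once the operator-product expansion of $\xi(k)$ with itself is computed carefully and the normal-ordering constants are tracked, the rest of the argument reduces to the standard Sato-style manipulation of the bilinear identity, and one may either cite Jimbo--Miwa \cite{JM} for the equivalence with the CKP hierarchy or verify it directly by expanding the contour integral in Schur-$Q$-like polynomials of the times $t_{\mathrm{odd}}$.
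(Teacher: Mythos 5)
The paper does not actually prove this proposition: it states it and immediately refers the reader to Date--Jimbo--Kashiwara--Miwa \cite{DJKM} and to van de Leur--Orlov--Shiota \cite{VOS}. Your overall architecture --- realize the $\xi_j$ on a bosonic Fock space, assemble the commuting currents $J_i=\sum_j(-1)^{j-1}\xi_j\xi_{-i-j-1}$, and extract a bilinear identity from the invariance of the Casimir element under $g\otimes g$ --- is indeed the strategy of those references, so at the level of outline you are pointed in the right direction.

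There is, however, a genuine gap at the decisive step. The identity you display,
\[
\oint\frac{dk}{2\pi i k}\,\tau\!\left(t-2[k^{-1}]\right)\tau\!\left(t'+2[k^{-1}]\right)e^{\sum_{i\,\mathrm{odd}}(t_i-t_i')k^i}=\tau(t)\tau(t'),
\]
is the \emph{BKP} bilinear identity, not the CKP one. The CKP bilinear identity lives at the level of the wave function, $\oint w(t,k)\,w(t',-k)\,dk=0$, and the entire difficulty of the C case --- the reason \cite{VOS} exists --- is that the CKP wave function is \emph{not} obtained from a single tau function by the naive shift $\tau(t-2[k^{-1}])/\tau(t)$. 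The bosonic Wick calculus behaves differently from the neutral-fermion calculus used for BKP: vacuum expectations of group-like elements produce inverse (square roots of) determinants rather than Pfaffians, the two-point function of the single self-paired field $\xi(k)$ has the opposite symmetry, and it is precisely this that forces the exponent $-2$ and requires the corrected bosonization formulae of \cite{VOS}. Your proposal treats all of this as bookkeeping (``keeping the signs and the factors of $2$ straight''), but if your computation really yielded the displayed identity you would have proved that $\tau$ is a BKP tau function, which is the wrong conclusion. To close the gap you must either reproduce the analysis of \cite{VOS} --- identifying exactly which functional of $\langle e^{H(t_+)}g\rangle$ satisfies the CKP wave-function identity and in what sense --- or simply cite that result, which is what the paper itself does.
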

This proposition is firstly given in \cite{DJKM}, and a renewed version can be referred to \cite{VOS}. From this proposition, we know that if $\tau(t)$ is a $\tau$-function of the CKP equation, then it is also the $\tau$-function of the CKP hierarchy after introducing higher-order time flows which are compatible with the CKP equation. A suitable choice of higher order time flow is set by
\begin{align*}
\alpha(x;t)=\sum_{k=1,odd}^{+\infty}t_kx^k.
\end{align*}
Therefore,
we have the following proposition.
\begin{prop}
The $\tau$-function of the CKP hierarchy has the determinant solution
\begin{align*}
\tau_N=\det\left(\iint_{\mathbb{R}_+^2}\frac{x^iy^j}{x+y}\alpha(x;t)\alpha(y;t)dxdy\right)_{0\leq i,j\leq N-1}.
\end{align*}
Or equivalently, it has the matrix integral form
\begin{align*}
\tau_N=\int_{\mathbb{T}}\left(\Delta_{N}(X)\Delta_{N}(Y)\right)^2\prod_{i,j=1}^N\frac{1}{x_i+y_j}\alpha(x_i;t)\alpha(y_j;t)dx_idy_j,
\end{align*}
where $\mathbb{T}$ is a subset of $\mathbb{R}_+^N\times\mathbb{R}_+^N$ indicated before.
\end{prop}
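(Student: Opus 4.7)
The plan is to assemble the statement from the three ingredients already laid out in the previous subsection: the CKP-equation solution formula of Proposition \ref{PPCKP}, the vertex-operator characterization of CKP $\tau$-functions in Proposition \ref{JM}, and Heine's formula established in equation (\ref{heine}). Concretely, I would take $\phi_i$ as in (\ref{sm}) with weight $\alpha(x;t)=\exp(V(x)+\sum_{k=1,\text{odd}}^\infty t_k x^k)$ and show that the matrix entries $a_{i,j}=\int^{t_1}\phi_i\phi_j\,dt_1$ appearing in Proposition \ref{PPCKP} coincide with the bimoments $I_{i,j}=\iint_{\mathbb{R}_+^2}\frac{x^iy^j}{x+y}\alpha(x;t)\alpha(y;t)dxdy$.

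The first step is to verify that the $\phi_i$ satisfy the dispersion relations $\partial_{t_k}\phi_i=\partial_{t_1}^k\phi_i$ for all odd $k$ required in Proposition \ref{PPCKP}. Differentiating under the integral sign gives $\partial_{t_k}\phi_i=\int_{\mathbb{R}_+}x^{i+k}\alpha(x;t)dx$ for every odd $k\geq 1$, and the right-hand side is manifestly symmetric in the index swap $k\leftrightarrow 1$-iteration, so the dispersion relations hold. The second step is to compute $\partial_{t_1}I_{i,j}$: since $\partial_{t_1}\alpha(x;t)=x\alpha(x;t)$, the factor $x+y$ cancels and one recovers exactly $\phi_i\phi_j$. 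This is precisely the rank-one shift condition (\ref{bsm}) specialized to the symmetric time-dependent setting, and it identifies $I_{i,j}$ with the antiderivative $\int^{t_1}\phi_i\phi_j\,dt_1$ (the lower limit being absorbed into the choice of $V(x)$). Consequently Proposition \ref{PPCKP} applies verbatim and $\tau_N=\det(I_{i,j})_{0\le i,j\le N-1}$ solves the CKP equation (\ref{CKP}).

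The third step promotes this from the first member of the hierarchy to the whole CKP hierarchy. Here Proposition \ref{JM} is the key: the fermionic/vertex-operator description says that any function of the form $\langle e^{H(t_+)}g\rangle^{-2}$ with $g$ group-like is a $\tau$-function of the entire CKP hierarchy. Our $\tau_N$, written as a determinant of bimoments whose time dependence enters only through the exponential weight $\exp(\sum_{k\text{ odd}} t_k x^k)$, is precisely of this vertex-operator form for the group-like element built from the static weight $\exp(V(x))$ on $\mathbb{R}_+$; therefore it automatically satisfies all the higher-flow bilinear identities, not merely the first one. Finally, applying the chain of equalities in (\ref{heine}) to $\det(I_{i,j})$ converts the determinant expression into the matrix-integral form over $\mathbb{T}\subset \mathbb{R}_+^N\times\mathbb{R}_+^N$ claimed in the proposition.

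The main obstacle is the third step, namely the passage from solving a single CKP equation to solving the whole CKP hierarchy. Proposition \ref{PPCKP} only certifies the first equation, and a priori one must check compatibility with all the higher odd-time flows. The clean way to discharge this is to invoke Proposition \ref{JM}: provided one can exhibit the group-like element $g$ whose fermionic expectation reproduces our determinant, the bilinear identities for every member of the hierarchy follow for free. Spelling out the $g$ explicitly (as a boson-fermion correspondence image of the weight $\alpha(x;0)$) is the technical crux; everything else amounts to rewriting bimoments and applying Heine's formula, which is mechanical.
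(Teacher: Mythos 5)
Your proposal is correct and follows essentially the same route as the paper: identify the entries $a_{i,j}=\int^{t_1}\phi_i\phi_j\,dt_1$ of Proposition \ref{PPCKP} with the bimoments $I_{i,j}$ (using $\partial_{t_1}I_{i,j}=\phi_i\phi_j$ and the dispersion relations for the seed functions), conclude that $\tau_N=\det(I_{i,j})$ solves the CKP equation, invoke Proposition \ref{JM} to lift this to the full hierarchy, and apply Heine's formula (\ref{heine}) for the matrix-integral form. The technical crux you flag --- exhibiting the group-like element $g$ explicitly --- is likewise left implicit in the paper's own proof, so your argument is no less complete than the original.
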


\begin{proof}
From Proposition \ref{PPCKP}, we know that
\begin{align*}
\tau_N=\det\left(\iint_{\mathbb{R}_+^2}\frac{x^iy^j}{x+y}\alpha(x;t)\alpha(y;t)dxdy\right)_{0\leq i,j\leq N-1}.
\end{align*}
is the solution of the CKP equation. According to Proposition \ref{JM}, $\tau_N$ also gives the $\tau$-function of the CKP hierarchy. Moreover, the seed functions $\phi_i$ in (\ref{sm}) satisfy the dispersion relations
\begin{align*}
\frac{\partial \phi_i}{\partial t_m}=\frac{\partial^m \phi_i}{\partial t_1^m}, \quad m\in2\mathbb{Z}+1.
\end{align*}
Thus the $\tau$-function of the CKP hierarchy is indeed the same as the one of the CKP equation with higher-order time flows.
\end{proof}

\subsection{The Cauchy two-matrix model and integrable systems}
In \cite{BGS3}, the authors proposed the concept of the Cauchy two-matrix model related to a special determinant point process with the Cauchy kernel. The correlation function of this model is defined by
\begin{align*}
&\mathcal{R}^{r,k}(x_1,\cdots,x_r;y_1,\cdots,y_k)=\\
&\frac{N!\prod_{j=1}^r\alpha(x_j)\prod_{j=1}^k\beta(y_j)}{(N-r)!(N-k!)\mathcal{Z}_N}\int \prod_{l=r+1}^N\alpha(x_l)dx_l\prod_{j=k+1}^N\beta(y_j)dy_j\Delta_N(X)\Delta_N(Y)\det[K(x_i,y_j)]_{1\leq i,j\leq N},
\end{align*}
where $K(x,y)$ is the Cauchy kernel in the form of $K(x,y)=\frac{1}{x+y}$ and $\mathcal{Z}_N$ is the normalization constant. Here we would like to mention that
the $(r,k)$-point correlation function allows one to compute the probability of having $r$ eigenvalues of the first matrix and $k$ eigenvalues of the second matrix in measurable sets of the real axis. Usually, we call the normalization constant $\mathcal{Z}_N$ the partition function which has the form
\begin{align*}
\mathcal{Z}_N=\frac{1}{N!}\int_{\mathbb{R}^N\times \mathbb{R}^N}\Delta_N(X)\Delta_N(Y)\det[K(x_i,y_j)]_{1\leq i,j\leq N}\alpha(X)\beta(Y)dXdY.
\end{align*}
Indeed, by using the Cauchy determinant formula
\begin{align*}
\det[K(x_i,y_j)]_{1\leq i,j\leq N}=\frac{\Delta_N(X)\Delta_N(Y)}{\prod_{i,j=1}^N(x_i+y_j)},
\end{align*}
we find that $\mathcal{Z}_N$ is exactly the matrix integral representation of the $\tau$-function of the CKP hierarchy if we impose the symmetric reduction $\alpha(X)=\beta(X)$ on $\mathcal{Z}_N$ and the time evolutions satisfy $\alpha(x;t)=e^{xt}\alpha(x;0)$.

According to the underlying infinite dimensional Lie algebras, integrable systems are classified into the KP hierarchy, the BKP hierarchy, the CKP hierarchy and the DKP hierarchy.  So far, the corresponding matrix models have all been found. The following table demonstrates this fact.
\begin{table}[htbp]\small
\caption{Matrix models related to integrable hierarchies}
\centering
\begin{tabular}{cc}\hline\hline
{Integrable Hierarchies} & {Ensembles}\\\hline
KP hierarchy & Hermite ensemble (2d-gravity matrix models)\\
BKP hierarchy & Bures ensemble (2d-quantum gravity matrix models)\\
CKP hierarchy & Cauchy ensemble (Cauchy two-matrix models)\\
DKP hierarchy & Gaussian Orthogonal/Symplectic ensemle\\
\hline\hline
\end{tabular}
\label{dyandalg}
\end{table}

\section{From Cauchy to Bures---A version of integrable systems}
In \cite{BGS3}, the authors realized there may exist some connections between the Cauchy two-matrix model and Bures ensemble on the level of correlation functions. Forrester and Kieburg gave the first explaination upon this topic in \cite{FK}. They noticed that the correlation function of Cauchy two-matrix model can be expressed in terms of determinants and that of Bures ensemble can be expressed in Pfaffians. Based on the relations between determinants and Pfaffians, they found the connection between Cauchy two-matrix model and Bures ensemble and thus giving the connection between determinant point process and Pfaffian point process. In this section, we will give another explaination of the relationship between the Cauchy two-matrix model and Bures ensemble from the view of point of integrable systems based on the reductive theory proposed by Jimbo and Miwa \cite{JM}.

 As one of the members of the BKP hierarchy, the following equation
\begin{equation}\label{BKP}
(D_{t_1}^6-5D_{t_1}^3D_{t_3}-5D_{t_3}^2+9D_{t_1}D_{t_5})\tau_N\cdot\tau_N=0
\end{equation}
shares the same matrix integral solutions as the BKP hierarchy. As is known, matrix integral solutions to the BKP hiearchy are given by \cite{HL}
\begin{align}
\tau_N=\int_{\mathbb{R}_+^{N}}\prod_{1\leq i<j\leq N}\frac{(x_i-x_j)^2}{x_i+x_j}\prod_{i=1}^{N}\alpha(x_i;t)dx_i
\end{align}
with $\alpha(x;t)$ has the same form in (\ref{sm}).
Then by applying the reduction theory in integrable hierarchies,
the Sawada-Kotera equation
\begin{align}\label{SK}
(D_{t_1}^6+9D_{t_1}D_{t_5})\tau_N\cdot\tau_N=0
\end{align}
is obtained as the 3-reduction of the BKP equation. Note that this equation is obtained from BKP equation (\ref{BKP}) by eliminating the terms with $D_3$ directly which means to impose the following additional constraints in terms of $\tau$-function as
\begin{align*}
\frac{\partial}{\partial t_{3j}}\tau_N(t)=0,\quad j=1,2,\cdots.
\end{align*}
This reduction confines the Lie algebra of the transformation group of Sawada-Kotera to a subalgebra of $go(\infty)$ as an affine Lie algebra $A_2^{(2)}$. Therefore, $$\tau_N=\int_{\mathbb{R}_+^{N}}\prod_{1\leq i< j\leq N}\frac{(x_i-x_j)^2}{x_i+x_j}\prod_{i=1}^{N}\exp(V(x_i)+x_it_1+x_i^5t_5)dx_i$$
is indeed the $\tau$-function of the Sawada-Kotera equation.

As is discussed in previous section, under certain assumptions, the partition function of the Cauchy two-matrix model gives the $\tau$-function of the CKP hierarchy. The $\tau$-function to the CKP hierarchy reads as
\begin{align}
\tau_N'=\int_{\mathbb{R}_+^N\times\mathbb{R}_+^N}\frac{\prod_{1\leq i,j\leq N}(x_i-x_j)^2\prod_{1\leq i,j\leq N}(y_i-y_j)^2}{\prod_{i,j=1}^N(x_i+y_j)}\prod_{i=1}^N\alpha(x_i;t)\alpha(y_i;t)dx_idy_i
\end{align}
with $\alpha(x;t)$ given in (\ref{sm}).
Also by applying the reduction theory in integrable systems, the CKP equation (\ref{CKP}) can be reduced to Kaup-Kuperschmidt equation
\begin{align}\label{KK1}
\left\{
\begin{aligned}
&D_{t_1}^4\tau_N'\cdot \tau_N'+6\tau_N'\sigma=0,\\
&D_{t_1}(D_{t_1}^5+144D_{t_5})\tau_N'\cdot\tau_N'-90D_{t_1}^2\tau_N'\cdot\sigma=0.
\end{aligned}
\right.
\end{align}
by imposing the additional constraint on the $\tau$-function as
\begin{align*}
\frac{\partial}{\partial t_{3j}}\tau'_N(t)=0,\quad j=1,2,\cdots
\end{align*}
Here we would like to remark that $\sigma$ is an auxiliary function in \eqref{KK1} which can be eliminated. Therefore it is not essential to consider $\sigma$. And the Lie algebra of the transformation group for the Kaup-Kuperschmidt hierarchy is a subalgebra of $sp(\infty)$ which is isomorphism to the affine Lie algebra $A_2^{(2)}$ and
\begin{align*}
\tau_N'=\int_{\mathbb{R}_+^N\times\mathbb{R}_+^N}\frac{\prod_{1\leq i,j\leq N}(x_i-x_j)^2\prod_{1\leq i,j\leq N}(y_i-y_j)^2}{\prod_{i,j=1}^N(x_i+y_j)}\prod_{i=1}^N\tilde{\alpha}(x_i;t)\tilde{\alpha}(y_i;t)dx_idy_i
\end{align*}
with $\tilde{\alpha}(x;t)=\exp(V(x)+xt_1+x^5t_5)$ being the $\tau$-function of Kaup-Kuperschmidt equation.

It is pointed out that there exists an integrable system called the modified Kaup-Kuperschmidt equation which links two slightly different Sawada-Kotera equation and Kaup-Kuperschmidt equation \cite{HRN}. For the Sawada-Kotera equation \eqref{SK} and the Kaup-Kuperschmidt equation \eqref{KK1} considered here, they are linked by the following modified Kaup-Kuperschmidt equation
\begin{align}\label{mKK1}
9v_{t_5}-5(v_{t_1}v_{3t_1}+v_{t_1t_1}^2+v_{t_1}^3+4vv_{t_1}v_{t_1t_1}+v^2v_{3t_1}-v^4v_{t_1})+v_{5t_1}=0.
\end{align}
By taking the dependent variable transformation
$
v=-3\left(\log({\tau'_N}/{\tau_N^2})\right)_{t_1}
$ with $\tau_N$ and $\tau'_N$ being the $\tau$-functions of Sawada-Kotera equation and Kaup-Kuperschmidt equation respectively,
equation \eqref{mKK1} can be transformed into
\begin{align}\label{mKK2}
\left\{
\begin{aligned}
&\left(9D_{2,t_5}+\frac{1}{6}D_{2,t_1}^5\right)\tau_N'\cdot\tau_N=0,\\
&D_{2,t_1}^2\tau_N'\cdot\tau_N=0,
\end{aligned}
\right.
\end{align}
where $D_{m,x}^jf(x)\cdot g(x)=\frac{\partial^j}{\partial s^j}f(x+s)g(x-ms)|_{s=0}$.
This reveals the fact that  that time-dependent partition functions of Bures ensemble and the Cauchy two-matrix model can be linked by the integrable system \eqref{mKK2}.

\section{Conclusion and Discussion}
In this paper, we mainly focus on the integrable hierarchy related to Cauchy two-matrix mode. A method to find out this connection is based on the relationship between orthogonal polynomials theory and integrable systems. To start with, we firstly consider a symmetric reduction of CBOPs. By introducing time deformation in CBOPs, a Toda-type equation along with its Lax pair are obtained. Then we find out that the time-dependent partition function of Cauchy two-matrix model (or Cauchy ensemble) is related to the CKP hierarchy according to Sato's $\tau$-function theory. Therefore, we give a picture of the relations between a certain integrable hierarchy and corresponding statistic model. By noting the fact that there exists an integrable system connecting Sawada-Kotera equation with Kaup-Kuperschmidt equation whose $\tau$-functions are the partition functions of Bures ensemble and Cauchy ensemble respectively, we give an explanation of the relation between Bures ensemble and Cauchy ensemble in the perspective of integrable system.

However, there is still a lot of work remaining to study. For instance, we are now able to apply the Borel-Virasoro algebra to the partition function of Cauchy ensemble to obtain the Virasoro constraints of CKP hierarchy. Although there has been some work upon the Virasoro constraintes of CKP hierarchy in literature \cite{WL}, it is worth studying matrix models and corresponding integrable hierarchies from the viewpoint of vertex operators. Furthermore, although we have found out all the main 2+1 dimensional integrable hierarchies and their corresponding matrix models, we wonder whether there exist some other types of matrix models connected with 1+1 dimensional integrable hierarchies such as the relation between Kontsevich integral and the KdV equation and what the algebraic and geometric structures are underlying these matrix models?

\section*{Acknowledgement}
This work was supported by National Natural Science Foundation of China (Grant No. 11271266, No.11705284 and No. 11701550) and Beijing Natural Science Foundation (Grant No. 1162003). Dr. C. X. Li would like to thank for the hospitality of School of Mathematics and Science during her visit to Fudan University. S. H. Li would like to thank Dr. X. K. Chang and Mr. B. Wang for helpful discussions and thank Prof. X. B. Hu for his attentive guidance.

\begin{appendix}
\section{Proofs of Propositions 3.1 and Proposition 3.3 by Pfaffians}
In this appendix, we would like to give a brief review on some known facts of Pfaffians and then prove Proposition 3.1 and Proposition 3.3 by pfaffian techniques. It turns out that the C-Toda lattice is nothing but pfaffian identities with the $\tau$-functions given by determinants.

The term Pfaffian was introduced by Arthur Cayley in 1852, who named it after Johann Friedrich Pfaff. Pfaffian is generally used in theoretical physics nowadays. Let us first have a look at the definition of a pfaffian. Let $A=(a_{i,j})_{1\leq i,j\leq 2N}$ be a $2N\times 2N$ skew-symmetric matrix. The pfaffian of $A$, that is, $Pf(A)$ is defined as
\begin{align*}
Pf(A)&=Pf(a_{i,j})_{1\leq i,j\leq 2N}=Pf\left[
\begin{array}{cccc}
0&a_{1,2}&\cdots&a_{1,{2N}}\\
-a_{1,2}&0&\cdots&a_{2,{2N}}\\
\vdots&\vdots&\ddots&\vdots\\
-a_{1,{2N}}&-a_{2,{2N}}&\cdots&0
\end{array}
\right]\\
&=\sum {'sgn}\left(
\begin{array}{cccc}
1&2&\cdots&2N\\
j_1&j_2&\cdots&j_{2N}
\end{array}
\right)a_{j_1j_2}a_{j_3j_4}\cdots a_{j_{2N-1}j_{2N}},
\end{align*}
where $\sum'$ means the sum over all possible combinations of pairs selected from $\{1,2,\cdots,2N\}$ satisfying $j_1<j_2$, $\cdots$, $j_{2N-1}<j_{2N}$ and $j_1<j_3<\cdots<j_{2N-1}$. As an equivalent definition, an $n$-th order pfaffian $pf(1,2,\ldots, 2N)$ can be expanded as
\begin{eqnarray*}
pf(1,2,\ldots, 2N)&=&\sum_{j=2}^{2N}(-1)^jpf(1,j)pf(2,3,\ldots,\hat{j},\ldots,2N)
\end{eqnarray*}
where $\hat{j}$ means that the index $j$ is omitted. By this formula, the pfaffian $pf(1,2,\ldots, 2N)$ may be recursively defined when pfaffian entries $pf(i,j)$ are given. It is obvious that $pf(1,2,\ldots, 2N)=Pf(A)$ when $(i,j)=a_{i,j}$. In this sense, the above-mentioned two definitions can be unified.

In the remaining part, we would like to adopt the notation below to denote a pfaffian
\begin{eqnarray*}
\centering
pf(i_1,i_2,\cdots,i_{2N})=Pf\left[
\begin{array}{cccc}
0&a_{i_1,i_2}&\cdots&a_{i_1,i_{2N}}\\
-a_{i_1,i_2}&0&\cdots&a_{i_2,i_{2N}}\\
\vdots&\vdots&\ddots&\vdots\\
-a_{i_1,i_{2N}}&-a_{i_2,i_{2N}}&\cdots&0
\end{array}
\right],
\end{eqnarray*}
where the pfaffian elements $pf(i_j,i_k)=a_{i_j,i_k}$.

It is noted that any $n$-th order determinant can be expressed as an $n$-th order pfaffian. Therefore, if we define pfaffian entries by
\begin{equation*}
pf(i,j)=pf(i^*,j^*)=0,\ \ pf(i,j^*)=I_{i,j},
\end{equation*}
then $\tau_n$ and $\tilde{\tau}_n$ given before can be also expressed by means of Pfaffians as
 \begin{align*}
&\tau_n=pf(0,1,\ldots,n-1,n-1^*,\ldots,1^*,0^*),\\
&\tilde{\tau}_n=pf(0,1,\ldots,n-2,n,n-1^*,\ldots,1^*,0^*).
\end{align*}

\begin{prop}
For the pfaffian $\tau_n=pf(0,1,\cdots,n-1,n-1^*,\cdots,1^*,0^*)$ defined above, if its pfaffian entries satisfy the relation
\begin{align*}
\frac{d}{dt}pf(i,j^*)=pf(i,j+1^*)+pf(i+1,j^*),
\end{align*}
then we have
\begin{align}
\frac{d}{dt}\tau_n&=\frac{d}{dt}pf(0,\cdots,n-1,n-1^*,\cdots,0^*)\nonumber\\
&=2pf(0,\cdots,n-2,n,n-1^*,\cdots,0^*)\nonumber\\
&=2\tilde{\tau}_n.\label{A1}
\end{align}
\end{prop}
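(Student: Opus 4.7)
The plan is to differentiate $\tau_n=pf(0,1,\ldots,n-1,(n-1)^*,\ldots,0^*)$ by applying the Pfaffian analogue of Jacobi's formula
$$\frac{d}{dt}pf(a_1,\ldots,a_{2n}) = \sum_{1\le k<l\le 2n} (-1)^{k+l-1}\,\frac{d\,pf(a_k,a_l)}{dt}\,pf(a_1,\ldots,\hat{a}_k,\ldots,\hat{a}_l,\ldots,a_{2n}),$$
where the hat denotes omission. Since $pf(i,j)=pf(i^*,j^*)=0$ are constants, only cross-pairs $(i,j^*)$ contribute nonzero derivatives. Substituting the given shift rule $\frac{d}{dt}pf(i,j^*)=pf(i,j+1^*)+pf(i+1,j^*)$ cleanly splits the result into $\frac{d\tau_n}{dt}=S_1+S_2$, where $S_1$ collects the starred-shift contributions and $S_2$ the unstarred-shift contributions.

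Next I would show $S_2=\tilde\tau_n$. Regrouping $S_2$ and summing first over $j^*$ (with $i$ fixed), the inner sum is exactly the Pfaffian cofactor expansion of the would-be Pfaffian obtained from $\tau_n$ by replacing the unstarred index $i$ with $i+1$. For $i<n-1$ this "shifted" index collides with an already-present unstarred index, so by the alternating property of Pfaffians the contribution vanishes; only the boundary term $i=n-1$ survives, and it reconstructs exactly $pf(0,1,\ldots,n-2,n,(n-1)^*,\ldots,0^*)=\tilde\tau_n$. A symmetric argument handles $S_1$: using $pf(i,j^*)=I_{i,j}=I_{j,i}=pf(j,i^*)$, the starred-side shift $j\to j+1$ again forces duplication for all $j<n-1$, while $j=n-1$ produces $\tilde\tau_n$ once more. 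Adding the two pieces yields $\frac{d\tau_n}{dt}=2\tilde\tau_n$.

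The main obstacle will be the combinatorial sign-tracking in the collision argument: verifying that the $(-1)^{k+l-1}$ signs from Pfaffian expansion combine correctly with the rearrangement of indices to give a clean cancellation for $i<n-1$ and the correct sign of $+\tilde\tau_n$ at the boundary. As a safety check — and in fact as a cleaner parallel derivation — I would cross-verify against the determinant formulation: since $\tau_n=\det(I_{i,j})_{0\le i,j\le n-1}$ with $\frac{dI_{i,j}}{dt}=I_{i+1,j}+I_{i,j+1}$, Leibniz row-by-row differentiation splits into two sums of determinants, and in each sum every determinant except the $k=n-1$ one has a repeated row (resp. column), yielding $\tilde\tau_n$ on each side. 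This determinant shadow both confirms the Pfaffian identity and fixes all signs unambiguously, while the Pfaffian presentation is what matches the bilinear framework used later for the C-Toda lattice in Proposition 3.3.
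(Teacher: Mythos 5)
Your proposal is correct and follows essentially the same route as the paper: both differentiate the Pfaffian entry by entry, regroup the result into Pfaffians with a single index shifted upward, observe that every shift except the boundary one ($i=n-1$, resp.\ $j=n-1$) creates a repeated index and hence vanishes, and then use the symmetry $I_{i,j}=I_{j,i}$ to identify the two surviving terms, giving $2\tilde\tau_n$. The only cosmetic difference is that the paper proves the intermediate ``derivative equals sum of index-shifted Pfaffians'' formula by induction on the Pfaffian order, whereas you invoke the Jacobi-type differentiation formula for Pfaffians directly; your determinant cross-check is precisely the alternative proof the paper attributes to \cite{CHL}.
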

\begin{proof}
Let us first prove the following equality
\begin{align}\label{ind}
\frac{d}{dt}pf(i_1,\cdots,i_N,j_1^*,\cdots,j_N^*)&=\sum_{k=1}^Npf(i_1,\cdots,i_k+1,\cdots,i_N,j_1^*,\cdots,j_N^*)\nonumber\\
&+\sum_{k=1}^Npf(i_1,\cdots,i_N,j_1^*,\cdots,j_k+1^*,\cdots,j_N^*)
\end{align}
with the pfaffian entries defined by
\begin{eqnarray*}
&&pf(i_k,i_l)=pf(j_k^*,j_l^*)=0, \\
&&\frac{d}{dt}pf(i_k,j_l^*)=pf(i_k,j_l+1^*)+pf(i_k+1,j_j^*).
\end{eqnarray*}

In what follows, we are going to prove \eqref{ind} by induction. It is obvious that \eqref{ind} is true for $N=1$.
Assume that \eqref{ind} holds for $N$. For $N+1$, we have
\begin{align*}
\frac{d}{dt}pf(i&_1,\cdots,i_N,i_{N+1},j_1^*,\cdots,j_N^*,j_{N+1}^*)\\
&=\frac{d}{dt}[\sum_{l=1}^{N+1}(-1)^{N+l}pf(i_{N+1},j_l^*)pf(i_1,\cdots,i_N,j_1^*,\cdots,\hat{j}_l^*,\cdots,j_{N+1}^*)]\\
&=\sum_{l=1}^{N+1}(-1)^{N+l}\left[pf(i_{N+1}+1,j_l^*)+pf(i_{N+1},j_{l}+1^*)\right]pf(i_1,\cdots,i_N,j_1^*,\cdots,\hat{j}_l^*,\cdots,j_{N+1}^*)\\
&+\sum_{l=1}^{N+1}(-1)^{N+l}pf(i_{N+1},j_l^*)[\sum_{k=1}^Npf(i_1,\cdots,i_k+1,\cdots,i_N,j_1^*,\cdots,\hat{j}_l^*,\cdots,j_{N+1}^*)\\
&+\sum_{k=1,k\not=l}^{N+1}pf(i_1,\cdots,i_N,j_1,\cdots,j_k+1^*,\cdots,\hat{j}_l^*,\cdots,j_{N+1}^*)]\\
&=\sum_{k=1}^{N+1}pf(i_1,\cdots,i_k+1,\cdots,i_{N+1},j_1^*,\cdots,j_{N+1}^*)+\sum_{k=1}^{N+1}pf(i_1,\cdots,i_{N+1},j_1^*,\cdots,j_k+1^*,\cdots,j_{N+1}^*),
\end{align*}

So far, we have completed the proof of \eqref{ind}. Notice that $pf(0,\cdots,n-2,n,n-1^*,\cdots,0^*)=pf(0,\cdots,n-1,n^*,n-2^*,\cdots,0^*)$ due to the symmetry $I_{i,j}=I_{j,i}$. Simply by taking $\{i_1,\cdots,i_N\}$ as $\{0,\cdots,n-1\}$ and $\{j_1^*,\cdots,j_N^*\}$ as $\{n-1^*,\cdots,0^*\}$ in \eqref{ind}, we have \eqref{A1}.
\end{proof}

Although Pfaffians may be obtained from antisymmetric determinants, their properties are more varied than those of determinants. Determinantal identities such as Pl\"ucker relations and Jacobi identities, are extended and unified as pfaffian identities which are very useful in integrable systems. Here we list two most useful identities:
\begin{align*}
&pf(a_1,a_2,a_3,a_4,1,\ldots,2n)pf(1,2,\ldots,2n)
=\sum_{j=2}^{4}(-1)^jpf(a_1,a_j,1,\ldots,2n)pf(a_2,\hat{a}_j,a_{4},1,\ldots,2n),\\&pf(a_1,a_2,a_3,1,\ldots,2n-1)pf(1,2,\ldots,2n)
=\sum_{j=1}^{3}(-1)^{j-1}pf(a_j,1,\ldots,2n-1)pf(a_1,\hat{a}_j,a_{3},1,\ldots,2n).
\end{align*}
As an application, we will demonstrate how to use pfaffian techniques to prove that $\tau_n$ and $\sigma_n$ are solutions to the C-Toda lattice (\ref{CT}).

\begin{prop}
The C-Toda lattice \eqref{CT} has solutions
\begin{align*}
&\tau_n=\begin{vmatrix}I_{0,0}&\cdots&I_{0,n-1}\\
\vdots&&\vdots\\
I_{n-1,0}&\cdots&I_{n-1,n-1}\end{vmatrix}=pf(0,\cdots,n-1,n-1^*,\cdots,0^*),\\
&\sigma_n=\begin{vmatrix}I_{0,0}&\cdots&I_{0,n}\\ \vdots&&\vdots\\I_{n-1,0}&\cdots&I_{n-1,n}\\w_0&\cdots& w_n\end{vmatrix}=(-1)^npf(d_0,0,\cdots,n-1,n^*,\cdots,0^*)
\end{align*}
with pfaffian entries defined by
\begin{align*}
&pf(d_0,i)=pf(d_0^*,i^*)=pf(d_0,d_0^*)=0,\, pf(d_0,i^*)=pf(d_0^*,i)=w_i, \\
&\frac{d}{dt}pf(i,j^*)=pf(i+1,j^*)+pf(i,j+1^*)=pf(d_0,d_0^*,i,j^*),\, \frac{d}{dt}pf(d_0,i^*)=pf(d_0,i+1^*)
\end{align*}
which correspond to the conditions $\frac{d}{dt}I_{i,j}=I_{i+1,j}+I_{i.j+1}=\omega_i\omega_j$, $\frac{d}{dt}\omega_i=\omega_{i+1}$.
\end{prop}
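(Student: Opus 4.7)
Plan:

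The approach is to recast both sides of the two bilinear equations of the C-Toda lattice as Pfaffians over a common base index set and recognize them as instances of the 4-index Pfaffian identity listed above. A useful auxiliary observation is a \emph{parity principle}: since every nonzero Pfaffian entry pairs an unstarred-valued label (an ordinary $i$ or $d_0$) with a starred-valued label (an ordinary $j^*$ or $d_0^*$), the Pfaffian over any multiset of labels with unequal counts of the two types vanishes identically, which will annihilate otherwise troublesome cross-terms.

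First I would extend Proposition A.1 to the \emph{insertion formula}
\[
\tfrac{d}{dt}\,pf(J) \;=\; pf(d_0, d_0^*, J),
\]
valid for any index multiset $J$ built out of ordinary labels $\{i, j^*\}$. The proof is an induction on $|J|$ parallel to that of Proposition A.1, using the entry-level agreement $\dot{pf}(i, j^*) = \omega_i \omega_j = pf(d_0, d_0^*, i, j^*)$. Specialising to $\bullet_n := (0, 1, \ldots, n-1, n-1^*, \ldots, 0^*)$ and to $\square_n := (0, 1, \ldots, n-2, n, n-1^*, \ldots, 0^*)$ (the Pfaffian form of $\tilde\tau_n$) gives $\dot\tau_n = pf(d_0, d_0^*, \bullet_n)$ and $\ddot\tau_n = 2\,\dot{\tilde\tau}_n = 2\,pf(d_0, d_0^*, \square_n)$. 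A short reordering also yields the sign-free identifications $pf(d_0, n^*, \bullet_n) = pf(d_0^*, n, \bullet_n) = \sigma_n$, the second one using the symmetry $I_{ij} = I_{ji}$ together with $pf(d_0, i^*) = pf(d_0^*, i)$.

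For the first equation $D_t\tau_{n+1}\cdot\tau_n = \sigma_n^2$, I would apply the 4-index Pfaffian identity with $(a_1, a_2, a_3, a_4) = (d_0, d_0^*, n, n^*)$ and $B = \bullet_n$. The left-hand side reorders (with total sign $(-1)^{2n}=1$) to $\dot\tau_{n+1}\cdot\tau_n$. On the right, the middle cross-term $pf(d_0, n, \bullet_n)\cdot pf(d_0^*, n^*, \bullet_n)$ vanishes by the parity principle (the two factors have $n{+}2$ unstarred-valued versus $n$ starred-valued labels, or vice versa), the first term collapses to $\dot\tau_n\cdot\tau_{n+1}$, and the last term equals $pf(d_0, n^*, \bullet_n)\cdot pf(d_0^*, n, \bullet_n) = \sigma_n^2$. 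Rearranging produces the first C-Toda equation.

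The main obstacle is the second equation $D_t^2\tau_{n+1}\cdot\tau_{n+1} = 4\sigma_{n+1}\sigma_n$, since the insertion formula cannot be iterated directly (re-inserting $(d_0, d_0^*)$ would duplicate labels). Substituting $\ddot\tau_{n+1} = 2\,pf(d_0, d_0^*, \square_{n+1})$ and $\dot\tau_{n+1} = 2\tilde\tau_{n+1}$, the equation reduces to the Pfaffian identity
\[
pf(d_0, d_0^*, \square_{n+1})\cdot\tau_{n+1} - 2\,\tilde\tau_{n+1}^{\,2} \;=\; \sigma_{n+1}\sigma_n,
\]
which I would attempt to derive from another specialisation of the 4-index identity with $(a_1, a_2, a_3, a_4)$ chosen to interpolate between $\square_{n+1}$ and $\bullet_{n+1}$ (most naturally involving the labels $n$ and $(n+1)^*$), again relying on the parity principle to kill the cross-terms. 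If this direct Pfaffian route proves intricate, a reliable fallback is the algebraic derivation used already in Section~3.1: once the first equation gives $\dot h_n = \sigma_n^2/\tau_n^2$ with $h_n = \tau_{n+1}/\tau_n$, combining it with $a_n = -\sigma_{n+1}\tau_n/(\sigma_n\tau_{n+1})$ and the relation $-\tfrac12\tfrac{d^2}{dt^2}\log\tau_{n+1} = a_n\tfrac{d}{dt}\log h_n$ immediately yields the second equation.
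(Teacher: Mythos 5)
Your plan coincides with the paper's appendix proof in both structure and detail: Pfaffianize $\tau_n$, $\tilde\tau_n$, $\sigma_n$, upgrade Proposition A.1 to the insertion formula $\frac{d}{dt}pf(J)=pf(d_0,d_0^*,J)$, and read off the two bilinear C-Toda equations as Pfaffian identities, with the unstarred/starred parity count annihilating the unwanted cross-terms. Your handling of the first equation is exactly the paper's: the even-base four-label identity with $(a_1,a_2,a_3,a_4)=(d_0,d_0^*,n,n^*)$ over $(0,\ldots,n-1,n-1^*,\ldots,0^*)$, where $pf(d_0,n,\cdot)\,pf(d_0^*,n^*,\cdot)=0$ by parity and the surviving cross-term is $\pm\sigma_n^2$.

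The one step you leave unresolved, the second equation, is closed in the paper by the \emph{second} identity in its list (the one whose shorter factor has an odd-cardinality base), not by another instance of the even-base four-label identity as you suggest. Concretely, use $I_{i,j}=I_{j,i}$ to write $\tilde\tau_{n+1}=pf((n+1)^*,C)$ and $\tau_{n+1}=pf(n^*,C)$ over the common odd base $C=(0,\ldots,n,n-1^*,\ldots,0^*)$; the identity with extra labels $d_0,d_0^*,(n+1)^*$ and appended label $n^*$ then gives
\begin{align*}
pf(d_0,d_0^*,(n+1)^*,C)\,pf(n^*,C)-pf(d_0,d_0^*,n^*,C)\,pf((n+1)^*,C)=\pm\,pf(d_0^*,C)\,pf(d_0,(n+1)^*,n^*,C),
\end{align*}
the term containing $pf(d_0,C)$ having vanished by your parity principle. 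The two right-hand factors reorder to $\pm\sigma_n$ and $\pm\sigma_{n+1}$, and the left-hand side is $\tfrac12\ddot\tau_{n+1}\tau_{n+1}-2\tilde\tau_{n+1}^2=\tfrac14 D_t^2\tau_{n+1}\cdot\tau_{n+1}$, which is the second C-Toda equation. Your fallback through relation \eqref{t1} does produce the equation, but it re-imports the orthogonality computation of Section 3.1 and so abandons the self-contained Pfaffian verification that is the point of this appendix proposition; the odd-base identity above is the missing ingredient that keeps the argument inside your own framework.
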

\begin{proof}
By using derivative formulae for Pfaffians repeatedly, it is easy to derive that
\begin{align*}
\tau_{n,t}&=pf(d_0,d_0^*,0,\cdots,n-1,n-1^*,\cdots,0^*)\\
&=2pf(0,\cdots,n-2,n,n-1^*,\cdots,0^*),\\
\tau_{n,tt}&=2pf(d_0,d_0^*,0,\cdots,n-2,n,n-1^*,\cdots,0^*).
\end{align*}
Substituting the above results into the C-Toda lattice \eqref{CT}, we obtain the following two expressions
\begin{align*}
&pf(d_0,d_0^*,0,\cdots,n,n^*,\cdots,0^*)pf(0,\cdots,n-1,n-1^*,\cdots,0^*)\\-&pf(0,\cdots,n,n^*,\cdots,0^*)pf(d_0,d_0^*,0,\cdots,n-1,n-1^*,\cdots,0^*)\\
=&pf(d_0,0,\cdots,n-1,n^*,\cdots,0^*)pf(d_0^*,0,n,n-1^*,\cdots,0^*),\\
&pf(d_0,d_0^*,0,\cdots,n-1,n^*,n-2^*,\cdots,0^*)pf(0,\cdots,n-1,n-1^*,\cdots,0^*)\\
-&pf(d_0,d_0^*,0,\cdots,n-1,n-1^*,\cdots,0^*)pf(0,\cdots,n-1,n^*,n-2^*,\cdots,0^*)\\
=&pf(d_0,0,\cdots,n-1,n^*,\cdots,0^*)pf(d_0^*,0,\cdots,n-1,n-2^*,\cdots,0^*),
\end{align*}
which are indeed two special cases of pfaffian identities mentioned above.
\end{proof}

\end{appendix}

\small
\bibliographystyle{abbrv}
\def\cydot{\leavevmode\raise.4ex\hbox{.}}
  \def\cydot{\leavevmode\raise.4ex\hbox{.}} \def\cprime{$'$}

\end{document}